\title{Phase transitions in filtration of real gases}
\author[1,2]{Valentin Lychagin}
\author[1,3]{Mikhail Roop}
\affil[1]{V.A. Trapeznikov Institute of Control Sciences, Russian Academy of Sciences, 65 Profsoyuznaya Str., 117997 Moscow, Russia}
\affil[2]{Department of Mathematics, The Arctic University of Norway, Postboks 6050, Langnes 9037, Tromso, Norway}
\affil[3]{Department of Physics, Lomonosov Moscow State University, Leninskie Gory, 119991 Moscow, Russia}
\date{}
\newtheorem{theorem}{Theorem}
\newtheorem{example}[theorem]{Example}
\newtheorem{proposition}[theorem]{Proposition}
\newtheorem{remark}[theorem]{Remark}
\newenvironment{proof}[1][Proof]{\noindent\textbf{#1.} }{\ \rule{0.5em}{0.5em}}
\begin{document}
\maketitle

%
%

\begin{abstract}
Steady adiabatic filtration of real gases is studied. Thermodynamical states of real gases are presented by Legendrian surfaces in 5-dimensional thermodynamical contact space. The relation between phase transitions and singularities of projection of the Legendrian surfaces on the plane of intensive variables is shown. The constructive method of finding solutions of the Dirichlet filtration problem together with analysis of critical phenomena is presented. Cases of van der Waals and Peng-Robinson gases are discussed in details.
\end{abstract}
\noindent{\it Keywords\/}: phase transitions, thermodynamics, filtration, porous media

\section{Introduction}
In this paper we continue (see~\cite{GLRT,LychGSA}) studies of phase
transitions in processes describing by nonlinear partial differential
equations. Here we consider the 3-dimensional steady adiabatic filtrations
of gases. \ The first valuable results in this area were obtained by
Leibenson L.S.~\cite{Lib} and Muskat M.~\cite{Mus}. They have proposed
a generalization of the Navier-Stokes equations for this case, where they
substituted the Newton law by the Darcy one.

The condition for stationary of filtration not only simplifies the
mathematical model but also has the practical reason because the development
of such processes not only take a long time but also  control of them  uses
the so-called \textit{cascade} \textit{method.}

The most complete results in this paper  are obtaining for the case of
filtrations in homogeneous and isotropic media, where we are able to get the
explicit formulae for solutions of the Dirichlet problem.

The paper is organized in the following way. The first part is devoted to
thermodynamics of gases and in the second part we apply its results to the
filtration problems.

We included the part devoted to the thermodynamics of gases for the
following reasons. First of all, we formulated explicitly (see also~\cite{LY}%
) that thermodynamical states are Legendrian or Lagrangian manifolds in the
corresponding contact or symplectic spaces. These manifolds equipped with a
quadratic differential forms, and domains of applicability of the
thermodynamic model are exactly domains where this form defines a Riemannian
structure. This allows us to locate domains of applicability of our model as
well as to find singularities of projections of the Lagrangian manifolds on
spaces of intensive and extensive variables. They are exactly the
submanifolds where the quadratic differential form changes its type.

Secondly, we use symplectic (and contact) geometry (see, for example,~\cite%
{KLR}) and corresponding Poisson and Lagrange brackets to complete equations
of state to the system defining Lagrangian manifold, in the cases when we
know only part of them. For the number of cases it was done by introducing and
using of the Massieu-Plank potential. We show that this potential appears in
the very natural way in an attempt to find caloric equation of state in the
case when we know a thermic equation~\cite{Fort}. We show and give
explicit formulae for expression of the main thermodynamical potentials,
heat capacities, speed of sound as well as curves where phase transitions
occur (so called coexistence curves) in terms of the Massieu-Plank potential.

To illustrate this approach we have chosen two the more popular models:
van der Waals and Peng-Robinson gases.

We use these models also to illustrate filtrations of these gases in more
details. Especially it concerns to filtration in a domain with one source
where the behaviour of gas and phase transitions are highly nontrivial.
\section{Thermodynamics of real gases}

\subsection{Preliminaries}

A thermodynamical system is described by two types of variables: extensive
and intensive (the notion was proposed by Richard Tolman in 1917~\cite{Tol}).

We postpone discussion of intensive variables and begin with extensives.

The defining property of extensive variables is their additivity with
respect to division of a system by a disjoint union of subsystems. The main
examples of extensives are: mass -- $m,$ volume -- $V,$ inner energy -- $E,$
entropy -- $S,$ etc.

Let's denote extensive variables as $\left( E,S,X\right) ,$ where $X=\left(
X_{1},...,X_{n}\right) ,$ and to each extensive (except $E$ ), i.e.
variables $\left( S,X\right) $ corresponds intensive variable $\left(
T,Y\right) ,$ where $T$ is a temperature and $Y=\left( Y_{1},..,Y_{n}\right)
.$

After such division the main law of thermodynamics (containing of the first
and the second laws) states that differential $1$-form%
\begin{equation*}
\theta =dE-TdS+YdX
\end{equation*}%
should be zero.

In other words, if we consider a $\left( 2n+3\right) $-dimensional space
$\mathbb{R}^{2n+3}$ with coordinates $\left( E,S,X,T,Y\right) $ and equipped
with differential form $\theta ,$ then a state of thermodynamical system is
a such submanifold $L\subset \mathbb{R}^{2n+3}$ where the main law of
thermodynamics holds, i.e. the restriction $\left. \theta \right\vert
_{L}=0. $ The pair $\left( \mathbb{R}^{2n+3},\theta \right) $ is the
standard model of the contact space (see for example,~\cite{KLR}) and $L$
is an integral manifold of $\theta .$ We'll require that $L$ is a maximal
integral manifold , i.e. Legendrian manifold, the notion proposed by
Vladimir Arnold, then $\dim L=n+1.$

Thus, by a \textit{thermodynamical state }we mean a Legendrian manifold in
the contact space $\left( \mathbb{R}^{2n+3},\theta \right) .$

In the case when functions $\left( S,X\right) $ are coordinates on a
Legendrian manifold $L,$ this manifold could be written in the form
\begin{equation*}
L=\left\{ E=F\left( S,X\right) ,\ T=\frac{\partial F}{\partial S},\ Y=-\frac{%
\partial F}{\partial X}\right\} ,
\end{equation*}%
and the restriction of intensive variables on $L$ plays the role of \textit{%
thermodynamical forces}.

In what follows we shall mainly consider another (but equivalent) form
\begin{equation*}
\theta ^{\prime }=-T^{-1}\theta =dS-T^{-1}dE-T^{-1}Y\ dX.
\end{equation*}%
Then the condition to be Legendrian will be not changed, but the projection $%
\pi\colon\mathbb{R}^{2n+3}\to\mathbb{R}^{2n+2},$ where $\pi \left(
E,S,X,T,Y\right) =\left( E,X,T,Y\right) ,$ allows us to eliminate $S$ from
the description of thermodynamical states. Indeed (see, for example,~\cite%
{KLR}, for more details), the restriction of the projection $\pi$ on a
Legendrian manifold $L$, $\pi\colon L\to \mathbb{R}^{2n+2},$ is an
immersion and the image $\widehat{L}=\pi \left( L\right) \subset \mathbb{R}%
^{2n+2}$ an immersed Lagrangian submanifold in the symplectic space $\left(
\mathbb{R}^{2n+2},\Omega \right) ,$ where the structure form $\Omega $ \
equals%
\begin{equation*}
\Omega =-d\theta ^{\prime }=d\left( T^{-1}\right) \wedge dE+d\left(
T^{-1}Y\right) \wedge dX.
\end{equation*}%
In the case, when functions $\left( E,X\right) $ are coordinates on $%
\widehat{L},$ this Lagrangian manifold can be written in the form
\begin{equation*}
T=\left( \frac{\partial F}{\partial E}\right) ^{-1},\ Y=\frac{\partial F}{%
\partial X}\left( \frac{\partial F}{\partial E}\right) ^{-1},
\end{equation*}%
for some function $F=F\left( E,X\right) .$

In practice, Lagrangian manifolds $\widehat{L}$ are defined by "physical
laws"%
\begin{equation*}
L=\left\{ f_{1}\left( E,X,T,Y\right) =0,...,\ f_{n+1}\left( E,X,T,Y\right)
=0\right\} ,
\end{equation*}
which are independent, i.e.
\begin{equation*}
df_{1}\wedge \cdots \wedge df_{n+1}\neq 0\quad\mbox{at points of}\quad\widehat{L},
\end{equation*}
and such that all the Poisson brackets $\left[ f_{i},f_{j}\right] =0,\
i,j=1,...,n+1,$ on $\widehat{L}.$

Here the Poisson brackets are taken with respect to structure form $\Omega :$%
\begin{equation*}
\left[ f_{i},f_{j}\right] \ \Omega ^{\wedge \left( n+1\right) }\stackrel{\mathrm{def}}{=}df_{i}\wedge df_{j}\wedge \Omega ^{\wedge n}.
\end{equation*}

Remark that not all points of the Lagrangian manifold are equally good for us.
Namely, if we consider a thermodynamics as a theory describing a
measurement of extensive variables by random processes (see~\cite{LY} for
more details), then points on the Lagrangian manifolds correspond to extreme
probability measures, finding by using Maximum Entropy Principle or
better -- Minimal Information Gain, and values of extensive are exactly the
means of these measures. The second central moment corresponds to a
quadratic differential form on $\widehat{L}.$ This form could be obtained by
the restriction of the universal quadratic form $-\kappa $ on $\mathbb{R}%
^{2n+2},$ where%
\begin{equation*}
\kappa =d\left( T^{-1}\right) \cdot dE+d\left( T^{-1}Y\right) \cdot dX,
\end{equation*}%
and dot $\cdot $ stands for symmetric product of differential $1$-forms.

This means that we have a domain $D\subset \widehat{L}$ of \textit{%
admissible points,} where quadratic differential form $\left. \kappa
\right\vert _{\widehat{L}}$ is negatively defined. We'll see later on \ that
the boundary of the domain $D$ is the set where we'll expect phase
transitions.

\subsection{Equilibrium}

Let's consider a system defined by extensive variables $\left( E,S,X\right) $
and let this system be a disjoint union of two subsystems $\left(
E_{1},S_{1},X_{1}\right) $ and $\left( E_{2},S_{2},X_{2}\right) $ and let
\begin{eqnarray*}
\theta &=&dE-TdS-YdX,\\
\theta_{1}&=&dE_{1}-T_{1}dS_{1}-Y_{1}dX_{1},\qquad
\theta_{2}=dE_{2}-T_{2}dS_{2}-Y_{2}dX_{2}
\end{eqnarray*}%
be the corresponding forms.

Then
\begin{equation*}
E=E_{1}+E_{2},\ S=S_{1}+S_{2},\ X=X_{1}+X_{2},
\end{equation*}%
and
\begin{equation*}
\theta =\theta _{1}+\theta _{2}+(T_{1}-T)dS_{1}+(T_{2}-T)\ dS_{2}+\left(
Y_{1}-Y\right) dX_{1}+\left( Y_{2}-Y\right) dX_{2}.
\end{equation*}%
On the given thermodynamical state we have $\theta =0,\theta _{1}=0$ and $%
\theta _{2}=0,$ therefore
\begin{equation*}
T_{1}=T_{2}=T,\ Y_{1}=Y_{2}=Y,
\end{equation*}%
i.e. for the union of subsystems extensive variables are added but intensive
should be equalized.

\begin{example}[Gibbs potential]
To describe gases we use extensives $\left( E,S,V,m\right) -$ inner energy,
entropy, volume and mass and in this case
\begin{equation*}
\theta =dE-TdS+pdV-\gamma dm,
\end{equation*}%
where $p$ is a pressure and $\gamma $ is a chemical potential.

Assume that a thermodynamical state is given by function $E=F\left(
S,V,m\right) .$ Then the additivity property for variables $\left(
E,S,V,m\right) $ means that $F$ is a homogeneous function of degree $1$ and
we'll write it in the form
\begin{equation*}
F\left( S,V,m\right) =mF\left( \frac{S}{m},\frac{V}{m},1\right) ,
\end{equation*}%
or in terms of specific energy $\varepsilon =E/m,\ $specific entropy $\sigma
=S/m,$ and specific volume $v=V/m$ as
\begin{equation}
\varepsilon =f\left( \sigma ,v\right) .  \label{Gst1}
\end{equation}%
Then
\begin{equation*}
\theta =m(d\varepsilon -Td\sigma +pdv)+\left( \varepsilon -T\sigma
+pv-\gamma \right) dm,
\end{equation*}%
and for state equation (\ref{Gst1}) we have
\begin{equation*}
d\varepsilon -Td\sigma +pdv=0,
\end{equation*}%
and
\begin{equation*}
\gamma =\varepsilon +pv-T\sigma .
\end{equation*}%
Function $G=E+pV-TS$ is called Gibbs free energy, and therefore $\gamma $ is
a specific Gibbs free energy.

The discussed above conditions of equilibrium for gases shall take now the
form:%
\begin{equation}
T_{1}=T_{2},\ P_{1}=P_{2},\ \gamma _{1}=\gamma _{2}.  \label{GSequi}
\end{equation}
\end{example}

\subsection{State equations for gases}

Consider a 5-dimensional contact space $\mathbb{R}^{5}$ with coordinates $%
\left( \sigma ,v,\varepsilon ,p,T\right) ,$where $\varepsilon ,\sigma ,v$
stand for the specific inner energy, specific entropy and specific volume
respectively, $p$ is the pressure and $T$ is the temperature.

In this case, the above contact structure is given by differential 1-form%
\begin{equation*}
\theta =d\sigma -T^{-1}d\varepsilon -T^{-1}pdv,
\end{equation*}%
and by thermodynamical states we mean Legendrian surfaces $L\subset
\mathbb{R}^{5},$ i.e. surfaces where $\left. \theta \right\vert _{L}=0,$ or,
in other words, where  the law of energy conservation holds.

As above to eliminate specific entropy $\sigma $ from our consideration we
consider a 4-dimensional space $\mathbb{R}^{4}$ with coordinates $\left(
v,\varepsilon ,p,T\right) $ and projection%
\begin{equation*}
\pi\colon\mathbb{R}^{5}\to\mathbb{R}^{4}\mathbf{,}
\end{equation*}%
where $\pi \left( \sigma ,v,\varepsilon ,p,T\right) =\left( v,\varepsilon
,p,T\right) .$

The restriction of projection $\pi $ on $L$ is an immersion. In order to
avoid extra technicalities, we'll assume that $\pi\colon L\to\widehat{L}$
is a diffeomorphism of $L$ with a surface $\widehat{L}\subset
\mathbb{R}^{4}\mathbf{.}$

The last is a Lagrangian surface in 4-dimensional symplectic space $\mathbb{R%
}^{4}$ \ equipped with the symplectic 2-form%
\begin{eqnarray*}
\Omega=-d\theta &=&d\left( T^{-1}\right) \wedge d\varepsilon +d\left( pT^{-1}\right)
\wedge dv= \\
&=&T^{-1}dp\wedge dv-T^{-2}dT\wedge \left( d\varepsilon +pdv\right) ,
\end{eqnarray*}%
i.e. $\left. \Omega \right\vert _{\widehat{L}}=0.$

Remark, that conditions $\left. \theta \right\vert _{L}=0$ and $%
\left. \Omega \right\vert _{\widehat{L}}=0$ are equivalent up to a shift of $\ L$
along axis $\sigma $ and moreover, \ 2-form $d\theta $ is the pullback of $%
\Omega $, $\pi ^{\ast }\left( \Omega \right) =d\theta .$

The Lagrangian surface $\widehat{L}$ is defined by two equations (or by two
thermodynamical laws)
\begin{equation*}
f\left( v,\varepsilon ,p,T\right) =0,\quad g\left( v,\varepsilon ,p,T\right) =0,
\end{equation*}
where functions $f$ and $g$ are independent and the condition $\left. \Omega
\right\vert _{\widehat{L}}=0$ is equivalent to the condition that the Poisson bracket $%
[f,g]$ vanishes on $\widehat{L}.$

In our case this bracket could be defined by the relation%
\begin{equation*}
df\wedge dg\wedge \Omega =[f,g]\ \Omega \wedge \Omega ,
\end{equation*}%
and has the following form%
\begin{equation*}
\lbrack f,g]=\frac{1}{2}\left(pT\left( f_{p}g_{\varepsilon }-f_{\varepsilon
}g_{p}\right) +T^{2}\left( f_{T}g_{\varepsilon }-f_{\varepsilon
}g_{T}\right) +T\left( f_{v}g_{p}-f_{p}g_{v}\right)\right).
\end{equation*}

To find state manifolds $\widehat{L}$ for real gases we'll  take the following two
equations:

\begin{enumerate}
\item \textit{Thermic equation of state}%
\begin{equation*}
p=A\left( v,T\right) ,
\end{equation*}%
and

\item \textit{Caloric equation of state}
\begin{equation*}
\varepsilon =B\left( v,T\right) .
\end{equation*}
\end{enumerate}

Then the compatibility condition for them
\begin{equation*}
[ p-A\left( v,T\right) ,\varepsilon -B\left( v,T\right) ]=0\quad\mbox{if}\quad\left\{ p=A\left( v,T\right) ,\varepsilon =B\left(
v,T\right) \right\}
\end{equation*}%
is equivalent to relation
\begin{equation*}
\left( T^{-2}B\right) _{v}=\left( T^{-1}A\right) _{T}.
\end{equation*}

Solutions of the last equation we'll present by using potential function $%
\phi \left( v,T\right) $ such that $T^{-1}A=R\phi _{v}$ and $T^{-2}B=R\phi
_{T},$ or
\begin{equation}
A=RT\phi _{v},\ \ B=RT^{2}\phi _{T},  \label{fpotential}
\end{equation}%
where $R$ is the universal gas constant.

In 1901 H. Kamerlingh Onnes proposed a virial model
(viris=force, Latin) for thermic equation~\cite{Onnes}:
\begin{equation*}
A=\frac{RT}{v}Z\left( v,T\right) ,
\end{equation*}%
where $Z\left( v,T\right) $ is a \textit{compressibility factor }of the form%
\begin{equation*}
Z\left( v,T\right) =1+\frac{A_{1}\left( T\right) }{v}+\cdots +\frac{%
A_{k}\left( T\right) }{v^{k}}+\cdots ,
\end{equation*}%
and $A_{k}\left( T\right) $ are \textit{virial coefficients}.

We have  the following relation between potential function and the
compressibility factor:
\begin{equation*}
Z=v\phi _{v}.
\end{equation*}%
Therefore,
\begin{equation*}
\phi =\alpha \left( T\right) +\ln v-A_{1}\left( T\right) v^{-1}-\frac{1}{2}%
A_{2}\left( T\right) v^{-2}-\cdots -\frac{1}{k}A_{k}\left( T\right)
v^{-k}-\cdots
\end{equation*}%
where $\alpha \left( T\right) $ is a smooth function.

Remark that in the case of ideal gases we have trivial virial coefficients, $%
A_{i}=0,$ and
\begin{equation*}
B=\frac{n}{2}RT,
\end{equation*}%
where $n$ is the degree of freedom .

This means that
\begin{equation*}
T^{2}\alpha _{T}=\frac{n}{2}T,
\end{equation*}%
and therefore the potential function $\phi $ has to be in the form
\begin{equation}
\phi =\frac{n}{2}\ln T+\ln v-A_{1}\left( T\right) v^{-1}-\frac{1}{2}%
A_{2}\left( T\right) v^{-2}-\cdots -\frac{1}{k}A_{k}\left( T\right)
v^{-k}-\cdots .  \label{MPpotential}
\end{equation}

In order to understand the meaning of potential function $\phi $ \ let's
express the main thermodynamical functions in terms of $\phi .$

Let's start with the specific entropy. Condition $\left. \theta \right\vert
_{L}=0$ gives us
\begin{eqnarray*}
\sigma _{T} &=&R\left( 2\phi _{T}+T\phi _{TT}\right) , \\
\sigma _{v} &=&R\left( \phi _{v}+T\phi _{vT}\right) ,
\end{eqnarray*}%
and therefore (up to a constant)%
\begin{equation}
\sigma =R\left( \phi +T\phi _{T}\right) .  \label{entropy}
\end{equation}

For the specific Gibbs free energy $\gamma $ we get%
\begin{equation}
\gamma =\varepsilon +pv-T\sigma =RT\left( v\phi _{v}-\phi \right) ,
\label{Gibbs}
\end{equation}%
and for specific enthalpy we have%
\begin{equation*}
\eta =\varepsilon +pv=RT\left( T\phi _{T}+v\phi _{v}\right).
\end{equation*}

Similarly, for the Helmholtz free energy or Massieu-Plank potential we get%
\begin{equation}
\Xi =\sigma -\frac{\varepsilon }{T}=R\phi .  \label{Mpotential}
\end{equation}%
For this reason, from now and on, we'll call $\phi $ \textit{Massieu-Plank
potential}.

\subsection{Riemannian structures on Lagrangian manifolds}
As we have noted above, not all points of the Lagrange surface $\widehat{L}$
correspond to thermodynamical states.

Namely, the fundamental quadratic differential form $\kappa $ on the surface
defines the domain of applicable states, i.e. states where $\left. \kappa
\right\vert _{\widehat{L}}$ is negative~\cite{LY}:

In our case, this form could be written in the following way%
\begin{equation*}
\kappa =d\left( T^{-1}\right) \cdot d\varepsilon +d\left( pT^{-1}\right)
\cdot dv,
\end{equation*}%
and in terms of Massieu-Plank potential it has the following form%
\begin{equation}
R^{-1}\kappa =-\left( \phi _{TT}+2T^{-1}\phi _{T}\right) dT\cdot dT+\phi
_{vv}dv\cdot dv.  \label{kappa}
\end{equation}

\begin{remark}
Let's assume that the above quadratic differential form is non degenerated.
Then $\kappa $ is a flat metric if and only if $\phi \left( T,v\right) =%
\frac{n}{2}\ln T+\phi _{0}\left( v\right) ,$ or if all virial coefficients
are constants.
\end{remark}

Summarizing this discussion we get the following description of the real gas
states.

\begin{theorem}
Thermodynamical states of real gases are defined by \textit{Massieu-Plank
potential function }$\phi \left( T,v\right) $ and have the following form:%
\begin{equation*}
p=RT\phi _{v},\quad\varepsilon =RT^{2}\phi _{T},\quad\sigma =R\left( \phi +T\phi
_{T}\right) ,
\end{equation*}%
where function $\phi $ has the following expression in terms of virial
coefficients%
\begin{equation*}
\phi =\frac{n}{2}\ln T+\ln v-A_{1}\left( T\right) v^{-1}-\frac{1}{2}%
A_{2}\left( T\right) v^{-2}-\cdots -\frac{1}{k}A_{k}\left( T\right)
v^{-k}-\cdots .
\end{equation*}%
The domain of applicable states on the plane $\left( T,v\right) $ is given
by inequalities%
\begin{equation}
\phi _{vv}<0,\ T\phi _{TT}+2\phi _{T}>0.  \label{applicable}
\end{equation}%
Phase transitions occur near the curve%
\begin{equation*}
\phi _{vv}=0\hspace{3mm}or\hspace{3mm}T\phi _{TT}+2\phi _{T}=0.
\end{equation*}
\end{theorem}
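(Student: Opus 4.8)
The theorem is essentially a summary of computations carried out in the preceding subsections, so the proof amounts to assembling those pieces in the right order and verifying the two inequalities that cut out the domain of admissible states. First I would recall that a thermodynamical state is a Lagrangian surface $\widehat{L}\subset\mathbb{R}^4$ defined by the thermic and caloric equations $p=A(v,T)$, $\varepsilon=B(v,T)$, and that the compatibility condition $[p-A,\varepsilon-B]=0$ on $\widehat L$ reduces, via the explicit Poisson bracket formula, to $\left(T^{-2}B\right)_v=\left(T^{-1}A\right)_T$. Introducing the potential $\phi(v,T)$ through $T^{-1}A=R\phi_v$, $T^{-2}B=R\phi_T$ — i.e.\ $A=RT\phi_v$, $B=RT^2\phi_T$ — solves this equation identically, which gives the stated forms $p=RT\phi_v$ and $\varepsilon=RT^2\phi_T$. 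The expression for $\sigma$ in terms of $\phi$ follows from integrating the relations $\sigma_T=R(2\phi_T+T\phi_{TT})$, $\sigma_v=R(\phi_v+T\phi_{vT})$ obtained from $\left.\theta\right|_L=0$, yielding $\sigma=R(\phi+T\phi_T)$ up to an additive constant.

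Next I would justify the virial expansion of $\phi$. Starting from Kamerlingh Onnes' ansatz $A=\frac{RT}{v}Z(v,T)$ with $Z=1+\sum_{k\ge1}A_k(T)v^{-k}$, the relation $A=RT\phi_v$ gives $\phi_v=Z/v$, which integrates in $v$ to $\phi=\alpha(T)+\ln v-\sum_{k\ge1}\frac1k A_k(T)v^{-k}$ for some smooth $\alpha(T)$. To pin down $\alpha$, I would use the ideal-gas limit: when all $A_k$ vanish, $B=RT^2\alpha_T$ must equal $\frac n2 RT$, so $\alpha_T=\frac{n}{2T}$, hence $\alpha(T)=\frac n2\ln T$ up to a constant, giving the displayed formula \eqref{MPpotential} for $\phi$.

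For the domain of applicable states, I would invoke the criterion from the subsection on Riemannian structures: admissible points are exactly those where the universal quadratic form $\kappa$ restricts to a negative-definite form on $\widehat L$. Using the coordinate expression \eqref{kappa}, namely $R^{-1}\left.\kappa\right|_{\widehat L}=-(\phi_{TT}+2T^{-1}\phi_T)\,dT\cdot dT+\phi_{vv}\,dv\cdot dv$, negative-definiteness in the diagonal basis $(dT,dv)$ is equivalent to requiring both coefficients negative, i.e.\ $\phi_{TT}+2T^{-1}\phi_T>0$ (equivalently $T\phi_{TT}+2\phi_T>0$, since $T>0$) and $\phi_{vv}<0$, which are precisely the inequalities \eqref{applicable}. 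The boundary of this domain, where $\left.\kappa\right|_{\widehat L}$ degenerates and changes signature, is the locus $\phi_{vv}=0$ or $T\phi_{TT}+2\phi_T=0$; identifying this locus with the onset of phase transitions is the substantive physical claim, and it is the one point where I would expect to lean on the earlier remark (and on \cite{LY,GLRT}) rather than on a self-contained computation. The main obstacle is therefore not algebraic — the identities are short — but conceptual: making precise why the change of type of $\kappa$ signals a phase transition, which I would handle by referring to the measurement-theoretic interpretation sketched in the Preliminaries and to the coexistence-curve analysis developed later in the paper for the van der Waals and Peng–Robinson examples.
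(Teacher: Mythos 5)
Your proposal is correct and follows essentially the same route as the paper, which states this theorem explicitly as a summary of the preceding subsections: the compatibility condition $\left(T^{-2}B\right)_{v}=\left(T^{-1}A\right)_{T}$ solved by the potential $\phi$, the entropy from $\left.\theta\right\vert_{L}=0$, the virial form of $\phi$ fixed by the ideal-gas limit, and the negative-definiteness of $\left.\kappa\right\vert_{\widehat{L}}$ yielding the two inequalities. Your closing observation that the identification of the degeneracy locus of $\kappa$ with phase transitions rests on the measurement-theoretic interpretation from the cited references, rather than on a self-contained argument, accurately reflects how the paper itself handles that point.
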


\begin{remark}
Due to the state equations we have $\varepsilon _{T}=RT\left( T\phi
_{TT}+2\phi _{T}\right) $ and $p_{v}=RT\phi _{vv}.$ Therefore, applicable
states belong to domain where
\begin{equation*}
\varepsilon _{T}>0\hspace{3mm}or\hspace{3mm}p_{v}<0,
\end{equation*}%
or, because we expect that $\varepsilon _{T}>0,$ the phase transitions occur
near the curve
\begin{equation*}
\phi _{vv}=0.
\end{equation*}
\end{remark}

\subsection{Singularities and phase transitions}

Denote by $L_{\phi }\subset \mathbb{R}^{4}\left( \varepsilon ,v,T,p\right) $
the Lagrangian surface that corresponds to the \textit{Massieu-Plank
function }$\phi \left( T,v\right) .$ This surface is given by equations%
\begin{equation}
\label{eqforPhi}
p =RT\phi _{v}, \quad
\varepsilon =RT^{2}\phi _{T}.
\end{equation}%
The projection $L_{\phi }\rightarrow \mathbb{R}^{2}\left( T,p\right) $ of
this surface on the plane $\left( T,p\right) $ of intensive variables has
singularities at points $\Sigma _{i}\subset L_{\phi },$ where differential $2
$-form $dp\wedge dT$ equals zero, or
\begin{equation*}
\Sigma _{i}=\left\{ \phi _{vv}=0\right\} .
\end{equation*}%
In the similar way, singularities $\Sigma _{e}$ of the projection $L_{\phi
}\rightarrow \mathbb{R}^{2}\left( \varepsilon ,v\right) $ on the plane of
extensive variables correspond to the points where differential $2$-form $%
d\varepsilon \wedge dv$ has zeroes, i.e.%
\begin{equation*}
\Sigma _{e}=\left\{ T\phi _{TT}+2\phi _{T}=0\right\} .
\end{equation*}%
In other words, we have singularities of two types that correspond to
singularities $\Sigma _{i}$ or $\Sigma _{e}.$ In the first case, the jumps
preserve values of intensive variables and have discontinuous for extensive
ones, and this is exactly what we call \textit{phase transitions}, but in
the second type singularities the jumps preserve extensive and we observe
discontinuous for intensive variables. Remark that both these singularities are
essential for us, because quadratic differential form $\kappa $ changes its
type at these points.

We say that distinct points $\left( v_{1},p_{1},\varepsilon
_{1},T_{1}\right) $ and $\left( v_{2},p_{2},\varepsilon _{2},T_{2}\right) $
on the Lagrangian surface $L_{\phi }$ are \textit{phase equivalent }iff%
\begin{equation*}
p_{1}=p_{2},\quad T_{1}=T_{2},\quad\gamma \left( v_{1},T_{1}\right) =\gamma \left(
v_{2},T_{2}\right) .
\end{equation*}

\begin{remark}
We have $\gamma =\varepsilon +pv-T\sigma ,$ and
\begin{equation*}
d\gamma =vdp-\sigma dT\ \mathrm{mod}\left( \theta \right) .
\end{equation*}%
This relation explains the Maxwell rule of equal areas on the plane $\left(
v,p\right) $ (or  $\left( \sigma ,T\right) )$ that used to apply to find
phase equivalent points.
\end{remark}

In terms of \ the Massieu-Plank potential phase equivalent points $\left(
v_{1},T\right) $ and $\left( v_{2},T\right) $ \ on surface $L_{\phi }$ could
be found from the following system of equations:
\begin{equation}
\label{PhaseEqui}
\begin{split}
&\phi _{v}\left( v_{2},T\right) -\phi _{v}\left( v_{1},T\right)=0,{}\\&
\phi \left( v_{2},T\right) -\phi \left( v_{1},T\right) -v_{2}\phi _{v}\left(
v_{2},T\right) +v_{1}\phi _{v}\left( v_{1},T\right)=0.
\end{split}
\end{equation}

Eliminating $v_{2}$ from the above equations and putting $v_{1}=v$ we get phase
transition or \textit{coexistence} curve $\Gamma _{\phi }\subset \mathbb{R}^{2}\left( v,T\right) $
which shows $\left( v,T\right) $ points of phase transitions, and eliminating $%
T$ we get phase transition curve $\Gamma _{\phi }\subset \mathbb{R}%
^{2}\left( v_{1},v_{2}\right) $ shows the specific volumes of phase
transitions.

Moreover, if we consider the equivalent equations
\begin{equation}
\begin{split}
\label{PhaseEqui1}
&\phi _{v}\left( v_{2},T\right)=\frac{p}{RT},\ \ \phi _{v}\left(
v_{1},T\right) =\frac{p}{RT},{}\\&
\phi \left( v_{2},T\right) -\phi \left( v_{1},T\right) -v_{2}\phi _{v}\left(
v_{2},T\right)+v_{1}\phi _{v}\left( v_{1},T\right) =0,
\end{split}
\end{equation}
and eliminate $\left( v_{1},v_{2}\right) $ from these equations we get phase
transition curve $\Gamma _{\phi }\subset \mathbb{R}^{2}\left( p,T\right) $
that shows the pressures and temperatures when phase transitions occur.

These equations also allow us to find jumps in heat, inner energy and work
on phase transitions:%
\begin{eqnarray*}
\Delta Q &=&T\left( S\left( v_{2},T\right) -S\left( v_{1},T\right) \right)
=RT\left( \Delta \phi +T\Delta \phi _{T}\right) , \\
\Delta W &=&-P\left( v_{2}-v_{1}\right) =-RT~\phi _{v}\ \Delta v=-RT\ \Delta
\phi ~, \\
\Delta \varepsilon  &=&\varepsilon \left( v_{2},T\right) -\varepsilon \left(
v_{1},T\right) =RT^{2}\Delta \phi _{T}=\Delta Q+\Delta W.
\end{eqnarray*}

\subsection{Heat capacities and speed of sound}

In this section we compute heat capacities and speed of sound in terms of
potential function $\phi .$

Recall that \textit{\ heat capacity} measures the amount of heat required to
change temperature by a given amount. There are two types of heat
capacities: $C_{v}$ -- heat capacity at fixed volume, $C_{p}$ -- heat capacity at
fixed pressure.

We begin with computation of $C_{v}.$ Following to the above saying, we define
it by the relation%
\begin{equation*}
Td\sigma -C_{v}dT\equiv 0\quad\mathrm{mod}\left\langle dv\right\rangle ,
\end{equation*}%
on $L_{\phi }.$

Using the description of the Lagrangian surface $L_{\phi }$ we get
\begin{equation*}
d\sigma -R\left( 2\phi _{T}+T\phi _{TT}\right) dT\equiv 0\quad\mathrm{mod}%
\left\langle dv\right\rangle ,
\end{equation*}%
and therefore%
\begin{equation*}
C_{v}=RT\left( 2\phi _{T}+T\phi _{TT}\right) =\varepsilon _{T}.
\end{equation*}

In the similar way we define $C_{p}$ by the following relation:
\begin{equation*}
Td\sigma -C_{p}dT\equiv 0\quad\mathrm{mod}\left\langle dp\right\rangle .
\end{equation*}%
We have
\begin{eqnarray}
dp=R\left(T\phi_{vv}dv+(\phi_{v}+T\phi_{Tv})dT\right)\label{ps relation} \\
Td\sigma=RT\left( \left( 2\phi _{T}+T\phi _{TT}\right) dT+\left( \phi
_{v}+T\phi _{vT}\right) dv\right) .
\end{eqnarray}%
Therefore,%
\begin{equation*}
dv\equiv -\frac{\phi _{v}+T\phi_{Tv}}{T\phi _{vv}}dT\quad\mathrm{mod}%
\left\langle dp\right\rangle
\end{equation*}%
and
\begin{equation*}
C_{p}=\frac{R}{\phi _{vv}}\left( T^{2}\left( \phi _{TT}\phi _{vv}-\phi
_{Tv}^{2}\right) +2T\left( \phi _{T}\phi _{vv}-\phi _{v}\phi _{Tv}\right)
-\phi _{v}^{2}\right) =\varepsilon _{T}-\frac{p_{T}^{2}T}{p_{v}},
\end{equation*}%
also%
\begin{equation*}
C_{p}-C_{v}+\frac{p_{T}^{2}T}{p_{v}}=0.
\end{equation*}

The speed of sound $c$ is defined as $c^{2}=C_{s},$ where
\begin{equation*}
dp-C_{s}dv^{-1}\equiv 0\quad\mathrm{mod}\left\langle d\sigma \right\rangle .
\end{equation*}%
In our case,%
\begin{equation*}
dT\equiv -\frac{\varepsilon _{v}+p}{\varepsilon _{T}}dv\quad\mathrm{mod}%
\left\langle d\sigma \right\rangle ,
\end{equation*}%
and therefore
\begin{eqnarray}
C_{s} &=&v^{2}\left( \frac{\varepsilon _{v}p_{T}-\varepsilon _{T}p_{v}}{%
\varepsilon _{T}}+\frac{1}{2}\frac{\left( p^{2}\right) _{T}}{\varepsilon _{T}%
}\right) =  \label{Cs formula} \\
&=&Rv^{2}\frac{T^{2}\left( \phi _{Tv}^{2}-\phi _{TT}\phi _{vv}\right)
+2T\left( \phi _{v}\phi _{vT}-\phi _{T}\phi _{vv}\right) +\phi _{v}^{2}}{%
2\phi _{T}+T\phi _{TT}}.
\end{eqnarray}%
It is easy to check that all the quantities $C_{v},C_{p}$ and $C_{s}$ are
connected by the following relation%
\begin{equation*}
T^{-1}C_{s}C_{v}+Rv^{2}\phi _{vv}C_{p}=0.
\end{equation*}

As a by-product of this equality we get the following observation.

\begin{theorem}
The speed of sound vanishes on singular set $\Sigma _{i}.$
\end{theorem}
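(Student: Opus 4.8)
The plan is to extract the conclusion from the algebraic identity
\[
T^{-1}C_{s}C_{v}+Rv^{2}\phi_{vv}C_{p}=0
\]
established just above, together with two facts already available: the speed of sound is $c$ with $c^{2}=C_{s}$, and, by the description of singularities given earlier, $\Sigma_{i}$ is cut out on $L_{\phi}$ by the single equation $\phi_{vv}=0$.

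The main line of the argument is then short. On $\Sigma_{i}$ one has $\phi_{vv}=0$, so the term $Rv^{2}\phi_{vv}C_{p}$ drops out of the identity and one is left with $T^{-1}C_{s}C_{v}=0$. It remains only to see that the coefficient $T^{-1}C_{v}$ is nonzero there: $T>0$ always, and $C_{v}=\varepsilon_{T}=RT\left(T\phi_{TT}+2\phi_{T}\right)$, which is strictly positive on the closure of the domain of applicable states because of the inequality $T\phi_{TT}+2\phi_{T}>0$ in~(\ref{applicable}); the only exceptional points are those of $\Sigma_{i}$ that also lie on the other singular branch $\Sigma_{e}=\{T\phi_{TT}+2\phi_{T}=0\}$, which would have to be treated separately. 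Cancelling the nonzero factor $T^{-1}C_{v}$ then yields $C_{s}=0$, i.e.\ $c=0$, on $\Sigma_{i}$.

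The step that I expect to be the real obstacle is precisely the dropping of the term $Rv^{2}\phi_{vv}C_{p}$: from $C_{p}=C_{v}-p_{T}^{2}T/p_{v}$ and $p_{v}=RT\phi_{vv}$ one sees that $C_{p}$ itself has a pole exactly along $\Sigma_{i}$, so $\phi_{vv}C_{p}$ is an indeterminate form of type $0\cdot\infty$ and cannot be discarded by inspection. To make this rigorous I would rewrite the product in its manifestly regular form $\phi_{vv}C_{p}=\phi_{vv}C_{v}-R\left(\phi_{v}+T\phi_{vT}\right)^{2}$ and then evaluate its limit along $\Sigma_{i}$ — or, equivalently, approach $\Sigma_{i}$ from within the applicable region and pass to the limit in the identity; this is where the substance of the statement lies. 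One should keep in mind that the companion quantity $-RTv^{2}\phi_{vv}$, the square of the \emph{isothermal} sound speed $-v^{2}(\partial p/\partial v)_{T}$, vanishes on all of $\Sigma_{i}$ by mere inspection of $p_{v}=RT\phi_{vv}$, so that form of the assertion is immediate and requires no limiting analysis.
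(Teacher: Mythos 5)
Your main line reproduces exactly the paper's own argument: the paper offers no more proof than the phrase ``as a by-product of this equality,'' i.e.\ it sets $\phi_{vv}=0$ in $T^{-1}C_{s}C_{v}+Rv^{2}\phi_{vv}C_{p}=0$ and discards the second term. The obstacle you flag in your second paragraph is therefore not a technicality you have postponed --- it is a genuine gap, and it is fatal to this route. Carry out the limit you defer: from your own regularized form,
\begin{equation*}
\phi_{vv}C_{p}=\phi_{vv}C_{v}-R\left(\phi_{v}+T\phi_{vT}\right)^{2}\;\longrightarrow\;-R\left(\phi_{v}+T\phi_{vT}\right)^{2}=-\frac{p_{T}^{2}}{R}\quad\text{on }\Sigma_{i},
\end{equation*}
which is generically nonzero. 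Substituting back into the identity gives, on $\Sigma_{i}$,
\begin{equation*}
C_{s}=\frac{R^{2}Tv^{2}\left(\phi_{v}+T\phi_{vT}\right)^{2}}{C_{v}}=\frac{Rv^{2}\left(\phi_{v}+T\phi_{vT}\right)^{2}}{2\phi_{T}+T\phi_{TT}}=\frac{v^{2}Tp_{T}^{2}}{C_{v}}>0
\end{equation*}
whenever $p_{T}\neq 0$. So the cancellation you hope for does not occur, and the argument proves the opposite of the claim except at the isolated points of $\Sigma_{i}$ where $p_{T}=0$. The paper's own explicit formulas confirm this: for the van der Waals gas, on the curve $T=(3v-1)^{2}/(4v^{3})$ one finds $C_{p,n}=2(3v-1)^{2}$, hence $C_{s}=4a/(3bnv)\neq 0$.

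Your closing observation is the correct resolution: the quantity that vanishes identically on $\Sigma_{i}=\{\phi_{vv}=0\}$ is the \emph{isothermal} sound speed squared, $-v^{2}p_{v}=-RTv^{2}\phi_{vv}$, not the adiabatic one defined by $dp-C_{s}\,dv^{-1}\equiv 0\ \mathrm{mod}\left\langle d\sigma\right\rangle$. The adiabatic $C_{s}$ vanishes instead on the zero set of the numerator $T^{2}\left(\phi_{Tv}^{2}-\phi_{TT}\phi_{vv}\right)+2T\left(\phi_{v}\phi_{vT}-\phi_{T}\phi_{vv}\right)+\phi_{v}^{2}$, which meets $\Sigma_{i}$ only where $\phi_{v}+T\phi_{vT}=0$. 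As stated, the theorem cannot be proved by this identity (nor, I believe, at all); either the singular set must be replaced by that adiabatic locus, or ``speed of sound'' must be read as the isothermal one. You should state this explicitly rather than leaving the decisive limit as a promissory note, since evaluating it is precisely what overturns the naive cancellation.
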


\subsection{Relations with Monge-Ampere equations}
Formula~(\ref{Cs formula}) for $C_{s}$ gives us a method to find equations of state
for a media with known speed of sound:%
\begin{equation*}
c^{2}=R\ F\left( v,T\right) v^{2}.
\end{equation*}

Indeed, in this case the formula shows that the Massieu-Plank potential%
\textit{\  }$\phi \left( T,v\right) $ satisfies the following Monge-Ampere
differential equation%
\begin{eqnarray*}
&&T^{2}\left( \phi _{Tv}^{2}-\phi _{TT}\phi _{vv}\right) +2T\left( \phi
_{v}\phi _{vT}-\phi _{T}\phi _{vv}\right) +\phi _{v}^{2}- \\
&&-F\left( v,T\right) ~\left( 2\phi _{T}+T\phi _{TT}\right) =0.
\end{eqnarray*}

Thus, for example, equation of state for ultrarelativistic fluids has the
form
\begin{equation*}
pv=c^{2}
\end{equation*}%
and the last equation takes the form%
\begin{equation*}
vT\left( \phi _{Tv}^{2}-\phi _{TT}\phi _{vv}\right) +2v\left( \phi _{v}\phi
_{vT}-\phi _{T}\phi _{vv}\right) +vT^{-1}\phi _{v}^{2}-\phi _{v}\left( 2\phi
_{T}+T\phi _{TT}\right) =0.
\end{equation*}

Also the virial expansion could be used to find approximations to solutions
for  the obtained Monge-Ampere equations.

Thus, for main thermodynamical functions we have%
\begin{eqnarray*}
\varepsilon  &=&RT\left( \frac{n}{2}-\frac{TA_{1}^{\prime }}{v}-\frac{%
TA_{2}^{\prime }}{2v^{2}}\right) +O\left( \frac{1}{v^{3}}\right) , \\
p &=&RT\left( \frac{1}{v}+\frac{A_{1}}{v^{2}}\right) +O\left( \frac{1}{v^{3}}%
\right) , \\
C_{v} &=&\frac{Rn}{2}-\frac{RT}{v}\left( TA_{1}^{\prime \prime
}+2A_{1}^{\prime }\right) -\frac{RT}{2v^{2}}\left( TA_{2}^{\prime \prime
}+2A_{2}^{\prime }\right) +O\left( \frac{1}{v^{3}}\right)  \\
C_{p} &=&\frac{R(n+2)}{2}-\frac{RT^{2}A_{1}^{\prime\prime}}{v}+O\left(\frac{1}{v^{2}}\right), \\
C_{s} &=&\frac{RT(n+2)}{n}+\frac{4RT}{vn^{2}}\left(T^{2}A_{1}^{\prime\prime}+(n+2)\left(TA_{1}^{\prime}+\frac{n}{2}A_{1}\right)\right)+O\left(\frac{1}{v^{2}}\right).
\end{eqnarray*}

\subsection{Examples}

\subsubsection{Ideal gas.}

The equations of state are the following

\begin{itemize}
\item Clapeyron-Mendeleev equation
\begin{equation*}
pv=RT,
\end{equation*}%
and

\item equation for internal energy%
\begin{equation*}
\varepsilon =\frac{n}{2}RT,
\end{equation*}%
where $n$ is the degree of freedom.
\end{itemize}

The potential function $\phi $ for this case we'll find from equations (\ref%
{eqforPhi}). We get%
\begin{eqnarray*}
\phi &=&\ln \left( T^{n/2}v\right) ,\ \sigma =R\ln \left( T^{n/2}v\right) , \\
C_{v} &=&\frac{Rn}{2},\ C_{p}=\frac{Rn}{2}+R,\ C_{s}=RT\frac{n+2}{2}.
\end{eqnarray*}%
The quadratic differential form equals%
\begin{equation*}
\kappa =-\frac{Rn}{2}\frac{dT^{2}}{T^{2}}-R\frac{dv^{2}}{v^{2}}.
\end{equation*}%
It is negative, and there are no phase transitions.

\subsubsection{van der Waals gas.}

In this case state equations are%
\begin{equation*}
p=\frac{RTv^{2}-a\left( v-b\right) }{v^{2}\left( v-b\right) },\ \varepsilon =%
\frac{n}{2}RT-\frac{a}{v}.
\end{equation*}%
Let us introduce following scale contact transformation%
\begin{equation*}
T=\frac{T}{T_{\mathrm{crit}}},\quad v=\frac{v}{v_{\mathrm{crit}}},\quad p=\frac{p}{p_{\mathrm{crit}}},\quad e=\frac{e}{e_{\mathrm{crit}}},\quad\sigma=\frac{\sigma}{\sigma_{\mathrm{crit}}},
\end{equation*}
where $T_{\mathrm{crit}}$, $v_{\mathrm{crit}}$, $p_{\mathrm{crit}}$, $e_{\mathrm{crit}}$, $\sigma_{\mathrm{crit}}$ are critical parameters for van der Waals gases:
\begin{equation*}
T_{\mathrm{crit}}=\frac{8a}{27Rb},\quad v_{\mathrm{crit}}=3b,\quad p_{\mathrm{crit}}=\frac{a}{27b^{2}},\quad e_{\mathrm{crit}}=\frac{a}{9b}, \quad \sigma_{\mathrm{crit}}=\frac{3R}{8},
\end{equation*}
then, we get the reduced equations of state in new dimensionless coordinates, which we shall continue denoting by $p$, $T$, $e$, $v$:%
\begin{equation*}
p=\frac{8T}{3v-1}-\frac{3}{v^{2}},\qquad
\varepsilon=\frac{4n}{3}T-\frac{3}{v}.
\end{equation*}%
One may show that the Massieu-Plank potential and the specific entropy for van der Waals gases are of the form:
\begin{equation}
\label{phiandent}
\phi =\ln\left(T^{n/2}(3v-1)\right)+\frac{9}{8vT},\quad\sigma =\ln \left( T^{4n/3}\left(3v-1\right)^{8/3} \right).
\end{equation}

Therefore, heat capacities for van der Waals gases have the following form:%
\begin{equation*}
C_{v} =\frac{n}{2}R, \qquad
C_{p} =\frac{C_{p,n}}{C_{p,d}}R,
\end{equation*}%
where%
\begin{eqnarray*}
C_{p,n} &=&4T(n+2)v^{3}-9nv^{2}+6nv-n,\\
C_{p,d} &=&8Tv^{3}-18v^{2}+12v-2,
\end{eqnarray*}%
and
\begin{equation*}
C_{s}=\frac{2aC_{p,n}}{3bnv(3v-1)^{2}}.
\end{equation*}

The corresponding quadratic differential form equals%
\begin{equation*}
\kappa =-\frac{Rn}{2}\frac{dT^{2}}{T^{2}}-\frac{9R(4Tv^{3}-9v^{2}+6v-1)}{4Tv^{3}(3v-1)^{2}}dv^{2}.
\end{equation*}%
Therefore, the domain of applicable states is given by the inequality%
\begin{equation*}
4Tv^{3}-9v^{2}+6v-1>0,\quad\mbox{or}\quad T>\frac{(3v-1)^{2}}{4v^{3}},
\end{equation*}%
and the phase transitions occur near the curve%
\begin{equation*}
T=\frac{(3v-1)^{2}}{4v^{3}}.
\end{equation*}

\begin{figure}[h!]
\centering
\includegraphics[scale=.4]{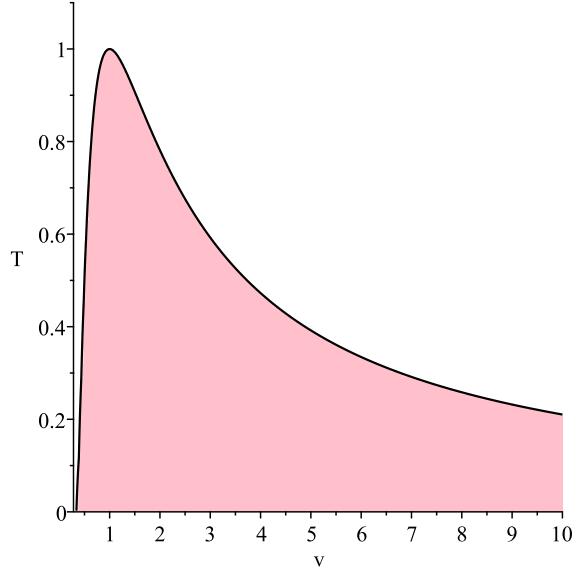}
\caption{Applicable domain for van der Waals gases.}
\label{vdWcurve}
\end{figure}

White domain in figure~\ref{vdWcurve} corresponds to the domain of applicable states.

Recall, that the \textit{coexistence curve} or \textit{binodal curve} may be obtained by means of Massieu-Plank potential using formulae~(\ref{PhaseEqui}) and~(\ref{PhaseEqui1}). Both these forms of equations of coexistence curves are essential for us, because they allow us to get such curves in different coordinates.

System~(\ref{PhaseEqui}) for van der Waals gases has the following form:
\begin{eqnarray}
\label{CoexCurve1}
\frac{(3v_{1}-1)(3v_{2}-1)(v_{1}+v_{2})}{v_{1}-v_{2}}\ln\left(\frac{3v_{2}-1}{3v_{1}-1}\right)=3(v_{1}+v_{2}-6v_{1}v_{2}),\\
\label{CoexCurve2}
T=\frac{(v_{1}+v_{2})(3v_{1}-1)(3v_{2}-1)}{8v_{1}^{2}v_{2}^{2}}.
\end{eqnarray}
Equation~(\ref{CoexCurve1}) defines a curve $\Gamma_{\phi}\subset\mathbb{R}^{2}(v_{1},v_{2})$ of specific volumes of phase transition, which is shown in figure~\ref{curvevol}, and equation~(\ref{CoexCurve2}) allows to find the corresponding temperature.

\begin{figure}
\centering
\includegraphics[scale=.3]{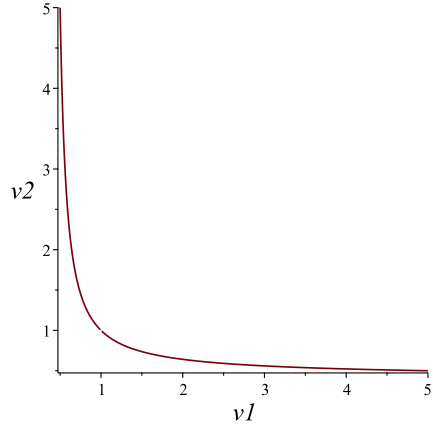}\\
\caption{Coexistence curve $\Gamma_{\phi}\subset\mathbb{R}^{2}(v_{1},v_{2})$ for van der Waals gases.}
\label{curvevol}
\end{figure}

Eliminating $v_{1}$ and $v_{2}$ from~(\ref{PhaseEqui1}) we get a binodal curve  $\Gamma_{\phi}\subset\mathbb{R}^{2}(p,T)$, which is presented in figure~\ref{curvepT}, and eliminating $v_{1}$ and putting $v_{2}=v$ we get a curve  $\Gamma_{\phi}\subset\mathbb{R}^{2}(T,v)$. It is presented in figure~\ref{curvevT}.  Since it is a difficulty to eliminate specific volumes, these curves have been obtained numerically.

\begin{figure}[ht!]
\centering \subfigure[]{
\includegraphics[width=0.4\linewidth]{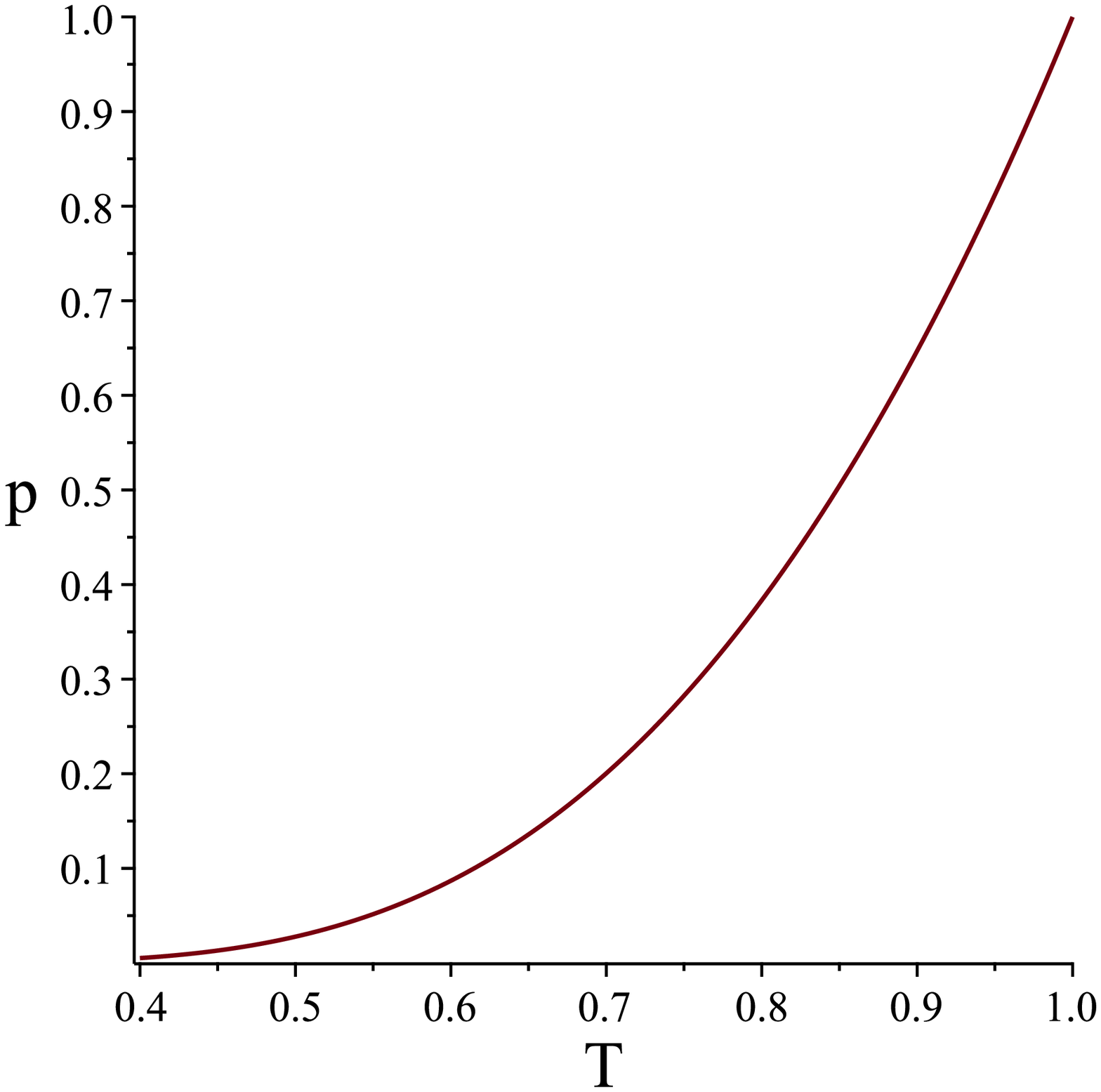} \label{curvepT} }
\hspace{4ex}
\subfigure[]{ \includegraphics[width=0.4\linewidth]{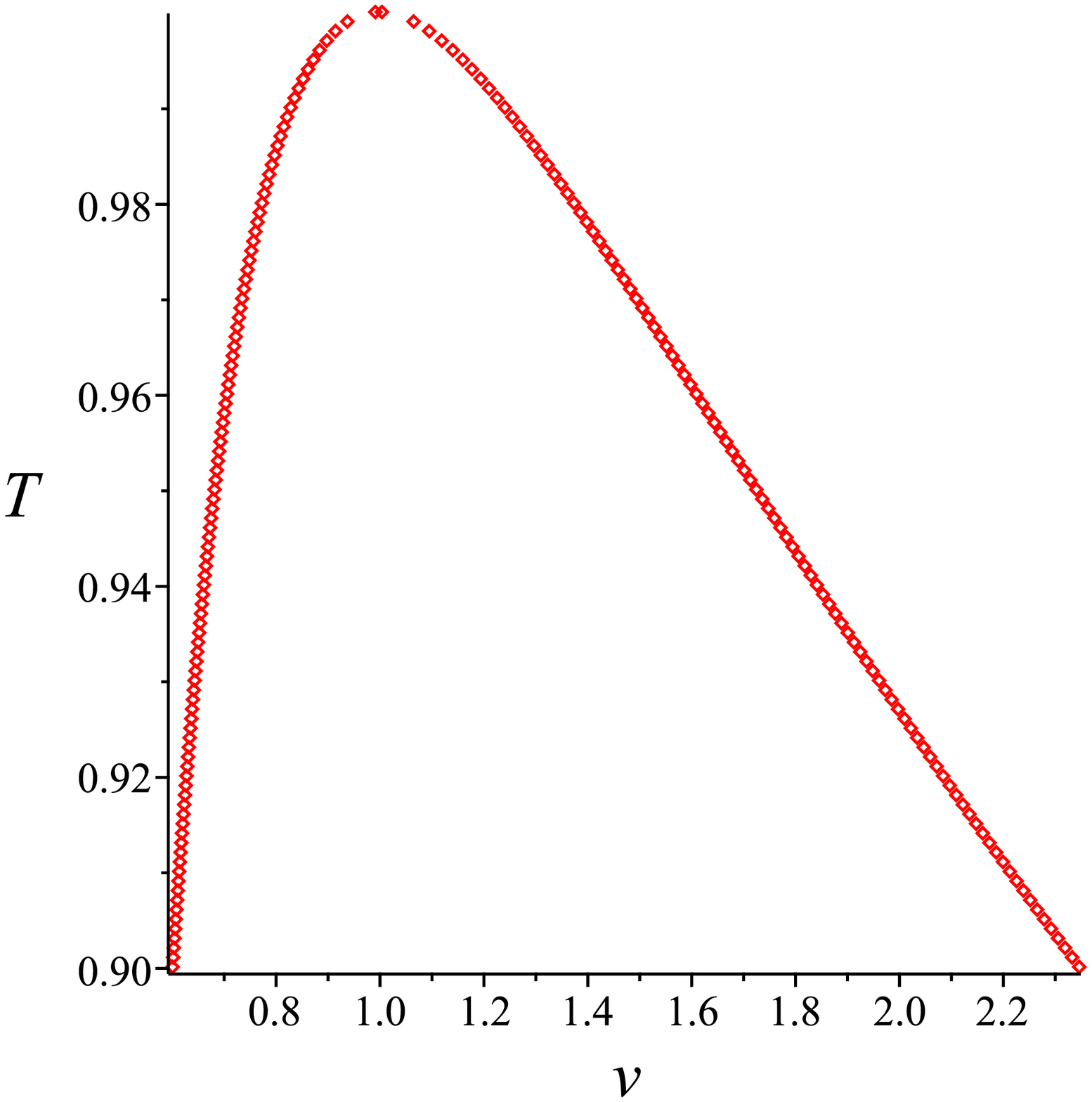} \label{curvevT} }
\caption{Coexistence curves for van der Waals gases in different coordinates: \subref{curvepT} in $\mathbb{R}^{2}(p,T)$, gas phase is on the right of the curve, liquid phase is on the left; \subref{curvevT} in $\mathbb{R}^{2}(T,v)$, inside the curve --- intermediate state (condensation process).} \label{CoexCurves}
\end{figure}

Moreover, it is possible to lift these curves into the space $\mathbb{R}^{3}(p,v,T)$, which is shown in figure~\ref{curvepvT}.

\begin{figure}
\centering
\includegraphics[scale=.4]{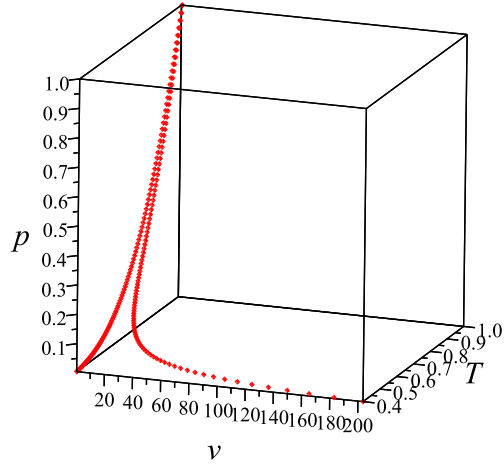}\\
\caption{Coexistence curve $\Gamma_{\phi}\subset\mathbb{R}^{3}(p,v,T)$ for van der Waals gases.}
\label{curvepvT}
\end{figure}

\subsubsection{Peng-Robinson gas.}
The Peng-Robinson EOS is the most popular equation of state for real
gases in the petroleum industry.

The first equation of state
\begin{equation*}
p=\frac{RT}{v-b}-\frac{\alpha }{\left( v+b\right) ^{2}-2b^{2}}
\end{equation*}%
was found by D.B. Robinson and his PhD student D. Peng~\cite{PR} in 1976.

To find the second equation $\varepsilon =B\left( v,T\right) ,$ we, as
above, shall take the Poisson bracket $[p-A\left( v,T\right) ,\varepsilon
-B\left( v,T\right) ],$ where
\begin{equation*}
A=\frac{RT}{v-b}-\frac{\alpha }{\left( v+b\right) ^{2}-2b^{2}},
\end{equation*}%
and require that this bracket equals zero if $p=A,\varepsilon =B.$

This condition gives differential equation on function $B:$%
\begin{equation*}
\left( b^{2}-2bv-v^{2}\right) B_{v}+\alpha =0,
\end{equation*}%
with solution%
\begin{equation*}
B=F\left( T\right) -\frac{\alpha }{\sqrt{2}b}\mathrm{arctanh}\left( \frac{v+b}{%
\sqrt{2}b}\right),
\end{equation*}%
where $F(T)$ is an arbitrary smooth function.

Using the above virial coefficients arguments we take
\begin{equation*}
\varepsilon =\frac{n}{2}RT-\frac{\alpha }{\sqrt{2}b}\mathrm{arctanh}\left(
\frac{v+b}{\sqrt{2}b}\right)
\end{equation*}%
as the second state equation.

By the following scale contact transformation%
\begin{equation*}
T\rightarrow \frac{\alpha }{bR}T,\quad\varepsilon \rightarrow \frac{\alpha }{b}%
\varepsilon,\quad p\rightarrow \frac{\alpha }{b^{2}}p,\quad v\rightarrow bv,\quad\sigma
\rightarrow R\sigma
\end{equation*}%
we transform the equations of state to the reduced form:%
\begin{eqnarray*}
p &=&\frac{T}{v-1}-\frac{1}{\left( v+1\right) ^{2}-2}, \\
\varepsilon &=&\frac{n}{2}T-\frac{\mathrm{arctanh}\left( \frac{v+1}{\sqrt{2}}%
\right) }{\sqrt{2}}.
\end{eqnarray*}%
It is easy to check that the following function
\begin{equation*}
\phi =\ln \left( T^{\frac{n}{2}}\left( v-1\right) \right) +T^{-1}\frac{\mathrm{%
arctanh}\left( \frac{v+1}{\sqrt{2}}\right) }{\sqrt{2}},
\end{equation*}%
is the Massieu-Plank potential for the Peng-Robinson gases, and%
\begin{equation*}
\sigma =\ln \left( T^{\frac{n}{2}}\left( v-1\right) \right) .
\end{equation*}

Therefore, we have for the Peng-Robinson gases%
\begin{eqnarray*}
C_{v} &=&\frac{n}{2}R, \\
C_{p} &=&\frac{C_{p,n}}{C_{p,d}}R,
\end{eqnarray*}%
where%
\begin{eqnarray*}
C_{p,n} &=&T(n+2)v^4+2\left(2(n+2)T-n\right)v^3+2\left((n+2)T+n\right)v^2-\\&&-2\left(2(n+2)T-n\right)v+T(n+2)-2n,\\
C_{p,d} &=&2\left(Tv^4+2(2T-1)v^3+2(T+1)v^2-2(2T-1)v+T-2\right) ,
\end{eqnarray*}%
and
\begin{equation*}
C_{s}=\frac{\alpha v^{2}}{bn}\frac{C_{p,n}}{\left( v-1\right) ^{2}\left(
v^{2}+2v-1\right) ^{2}}.
\end{equation*}

The corresponding quadratic differential form equals%
\begin{equation*}
\kappa =-\frac{Rn}{2}\frac{dT^{2}}{T^{2}}-\frac{R}{T}\frac{Tv^{4}+\left(
4T-2\right) v^{3}+\left( 2T+2\right) v^{2}+\left( 2-4T\right) v+T-2}{\left(
v-1\right) ^{2}\left( v^{2}+2v-1\right) ^{2}}dv^{2}.
\end{equation*}%
Therefore, the domain of applicable states is given by the inequality%
\begin{equation*}
Tv^{4}+\left( 4T-2\right) v^{3}+\left( 2T+2\right) v^{2}+\left( 2-4T\right)
v+T-2>0,
\end{equation*}%
and the phase transitions occur near the curve%
\begin{equation*}
T=\frac{2(v+1)(v-1)^{2}}{(v^{2}+2v-1)^{2}}.
\end{equation*}
\begin{figure}[h]
\centering
\includegraphics[scale=.4]{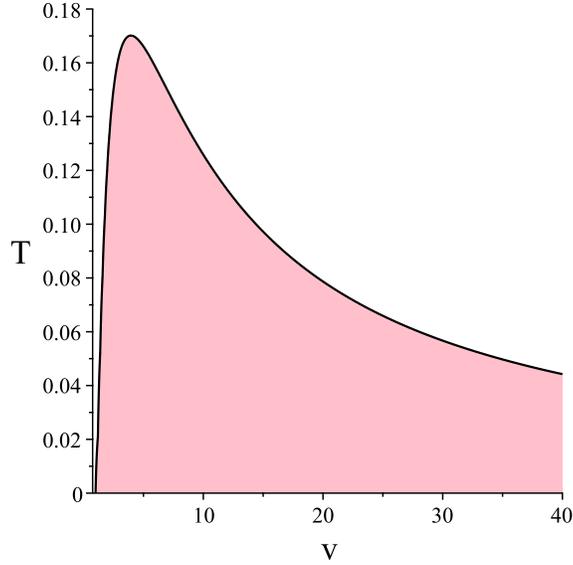}
\caption{Applicable domain for Peng-Robinson gases.}
\label{PRcurve}
\end{figure}
White domain in figure~\ref{PRcurve} corresponds to the domain of applicable states.
\section{Steady adiabatic filtration of real gases}

\subsection{Basic equations}

Steady gas filtration in homogeneous porous 3-dimensional media is described
by the following system of differential equations~\cite{Lib,Mus,Sch}:

\begin{itemize}
\item Darcy law or conservation of momentum%
\begin{equation}
\mathbf{U}=-\frac{k}{\mu }(\nabla _{p}),  \label{Darcy2}
\end{equation}%
for gas flow in anisotropic porous media, where $\rho $ is the density of
gas, $p$ -- pressure, $\mu $ -- dynamic viscosity, $k=\left\Vert
k_{ij}\right\Vert $ -- permeability tensor, $\mathbf{U}$ -- vector field of
filtration rate , and $\nabla _{p}$ -- the gradient of pressure.

\item conservation of mass%
\begin{equation}
\mathrm{div}\left( \rho \mathbf{U}\right) =0,  \label{mass}
\end{equation}%
where $\rho =v^{-1}.$

\item In addition to these equations, we assume that the filtration is adiabatic,
\begin{equation}
\mathbf{U}\left( \sigma \right) =0,  \label{adiabatic}
\end{equation}%
where $\sigma $ is a specific entropy, and by $\mathbf{U}\left( f\right) $
we denoted the derivative of a function $\ f$ along vector field $\mathbf{U.}
$

\item It is worth to note that the last condition in the cases of sources
leads us to local constancy of entropy in neighborhoods of the sources. We
extend this observation and propose some kind of \textquotedblleft
ergodicity hypothesis\textquotedblright . \ Namely, we'll consider a domain $%
\mathbf{D}\subset \mathbb{R}^{3}$ with sources $\left\{ a_{i}\right\} ,$
having the common constant specific entropy $\sigma _{0}.$

More precisely, consider a domain $\mathbf{D}\subset \mathbb{R}^{3}$ with sources $%
\left\{ a_{i}\right\} ,$ then under condition (\ref{adiabatic}) we represent
$\mathbf{D}$ as a union of domains $\mathbf{D}_{k}$, $k=1,2,..$, such that the
ergodicity hypothesis holds for each domain $D_{k},$ i.e. all sources in
this domain have the same entropy, and the rest$\ \mathbf{D}^{\prime }=%
\mathbf{D}\setminus \cup \mathbf{D}_{k}$ contains no sources. Filtrations in
these domains are independent. Thus, we may restrict ourselves by the case
of domains, where the ergodicity hypothesis holds.

\item The permeability tensor $k$ is a symmetric positive tensor depending
on media, $k=k\left( v,T\right) ,$ as well as viscosity $\mu .$
\end{itemize}

\subsection{Filtration Equations}

\begin{proposition}
The following formulae
\begin{equation*}
dp\equiv -v^{-2}c^{2}\ dv\mathrm{mod}\left\langle d\sigma \right\rangle ,
\end{equation*}%
and
\begin{equation*}
v^{-2}c^{2}=R\frac{T^{2}\left( \phi _{Tv}^{2}-\phi _{TT}\phi
_{vv}\right) +2T\left( \phi _{v}\phi _{vT}-\phi _{T}\phi _{vv}\right) +\phi
_{v}^{2}}{2\phi _{T}+T\phi _{TT}}
\end{equation*}%
are valid.
\end{proposition}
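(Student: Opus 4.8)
The plan is to derive both displayed identities directly from the definition of the speed of sound given in Section~2.6 together with the explicit formula~(\ref{Cs formula}) for $C_s$, so that the proposition is essentially a repackaging of material already established. Recall that $c^2 = C_s$ where $C_s$ is fixed by the congruence
\begin{equation*}
dp - C_s\, d\!\left(v^{-1}\right) \equiv 0 \quad \mathrm{mod}\left\langle d\sigma\right\rangle
\end{equation*}
on the Lagrangian surface $L_\phi$. The first step is simply to substitute $d\!\left(v^{-1}\right) = -v^{-2}\,dv$ into this congruence, which yields
\begin{equation*}
dp \equiv C_s\, d\!\left(v^{-1}\right) = -v^{-2} C_s\, dv = -v^{-2} c^2\, dv \quad \mathrm{mod}\left\langle d\sigma\right\rangle,
\end{equation*}
establishing the first formula.

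For the second formula, I would invoke the already-derived expression~(\ref{Cs formula}),
\begin{equation*}
C_s = Rv^2\,\frac{T^{2}\left( \phi _{Tv}^{2}-\phi _{TT}\phi _{vv}\right)+2T\left( \phi _{v}\phi _{vT}-\phi _{T}\phi _{vv}\right) +\phi _{v}^{2}}{2\phi _{T}+T\phi _{TT}},
\end{equation*}
and divide through by $v^2$. Since $c^2 = C_s$, this gives exactly
\begin{equation*}
v^{-2}c^2 = R\,\frac{T^{2}\left( \phi _{Tv}^{2}-\phi _{TT}\phi _{vv}\right)+2T\left( \phi _{v}\phi _{vT}-\phi _{T}\phi _{vv}\right) +\phi _{v}^{2}}{2\phi _{T}+T\phi _{TT}},
\end{equation*}
as claimed. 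One can also arrive at this intrinsically: working modulo $d\sigma$ on $L_\phi$, the entropy relation $d\sigma = R\big((2\phi_T + T\phi_{TT})\,dT + (\phi_v + T\phi_{vT})\,dv\big)$ lets one eliminate $dT$ in favour of $dv$ via $dT \equiv -\frac{\phi_v + T\phi_{vT}}{2\phi_T + T\phi_{TT}}\,dv$, substitute into $dp$ expanded from~(\ref{ps relation}), and read off the coefficient; the numerator then assembles into the quadratic form in the $\phi$-derivatives shown above.

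There is no serious obstacle here — the content is bookkeeping that has already been done in Sections~2.5 and~2.6. The only points requiring a word of care are the sign conventions ($d(v^{-1}) = -v^{-2}\,dv$ produces the minus sign in the first identity) and the bookkeeping check that the factor $v^2$ cancels cleanly between the definition $c^2 = C_s$ and formula~(\ref{Cs formula}), so that the final right-hand side is $v$-weight neutral apart from the explicit $\phi$-derivatives. Accordingly, I would present the proof in two short lines: one invoking the definition of $C_s$ with the chain rule, and one dividing~(\ref{Cs formula}) by $v^2$.
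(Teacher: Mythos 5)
Your proposal is correct and follows exactly the route of the paper's own proof: the first identity is the definition of $C_s$ combined with $d\left(v^{-1}\right)=-v^{-2}\,dv$, and the second is formula~(\ref{Cs formula}) divided by $v^{2}$. No further comment is needed.
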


\begin{proof}
First of all, by the definition of the speed of sound we have%
\begin{equation*}
dp\equiv c^{2}d\left( v^{-1}\right) \mathrm{mod}\left\langle d\sigma
\right\rangle \equiv -v^{-2}c^{2}dv\mathrm{mod}\left\langle d\sigma
\right\rangle .
\end{equation*}

The second relation follows from the above formula for $C_{s}$
\end{proof}

Define a new tensor $\left\Vert Q_{ij}\left( v,\sigma _{0}\right)
\right\Vert $ as follows.

Assume that a fixed level $\sigma_{0}$ of entropy $\sigma $ is given and define function $%
T=\tau \left( v,\sigma _{0}\right) $ as a solution of the equation $\sigma
=R\sigma _{0},$ i.e.%
\begin{equation*}
\phi +T\phi _{T}=\sigma _{0}.
\end{equation*}%
The derivative of the left hand side of this equation in $T$ equals $%
R^{-1}\varepsilon _{T}$, that is positive in applicable domain, and
therefore the function $\tau $ exists and smooth.

Now we put%
\begin{equation}
Q_{ij}\left( v,\sigma _{0}\right) =\int \frac{c^{2}k_{ij}}{v^{3}\ \mu }dv,
\label{Qfunction}
\end{equation}%
where all functions under integral are functions in $v.$

As the integral of positive tensor the above tensor is also positive and
symmetric.

\begin{theorem}
Basic equations of adiabatic filtration with a given level of specific
entropy $R\sigma _{0}$ are equivalent to equation%
\begin{equation}
\sum_{i,j}\frac{d^{2}Q_{ij}}{dx_{i}dx_{j}}=0.  \label{QLaplace}
\end{equation}
\end{theorem}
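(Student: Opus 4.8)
The plan is to reduce the three physical equations (Darcy, mass conservation, adiabaticity) to a single PDE by exploiting that, once $\sigma=R\sigma_0$ is fixed, every thermodynamic quantity becomes a function of $v$ alone along the flow, and then to recognize the resulting divergence expression as $\sum_{i,j}\partial_{x_i}\partial_{x_j}Q_{ij}$.

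First I would use adiabaticity~\eqref{adiabatic}: since $\mathbf{U}(\sigma)=0$ and (by the ergodicity hypothesis) $\sigma\equiv R\sigma_0$ is constant on the whole domain $\mathbf{D}_k$, the state of the gas is confined to the curve $T=\tau(v,\sigma_0)$ on the Lagrangian surface $L_\phi$. The existence and smoothness of $\tau$ is exactly what was argued just before the statement (the $T$-derivative of $\phi+T\phi_T$ equals $R^{-1}\varepsilon_T>0$ in the applicable domain). Hence $p$, $c^2$, $k_{ij}$, $\mu$, and $v$ are all functions of a single spatial scalar; I will take $v$ (equivalently $\rho=v^{-1}$) as that scalar field on $\mathbf{D}_k$. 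Along this curve, by the Proposition, $dp=-v^{-2}c^2\,dv \ \mathrm{mod}\langle d\sigma\rangle$, and since $d\sigma=0$ on our states this is an honest identity $\nabla p=-v^{-2}c^2\,\nabla v$.

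Next I substitute into Darcy's law~\eqref{Darcy2}: $\mathbf{U}=-\tfrac{k}{\mu}\nabla p=\tfrac{k}{\mu}v^{-2}c^2\,\nabla v$, so the mass flux is $\rho\mathbf{U}=v^{-1}\mathbf{U}$ with components $(\rho\mathbf U)_i=\sum_j \tfrac{c^2 k_{ij}}{v^{3}\mu}\,\partial_{x_j}v$. Now I recognize the coefficient $\tfrac{c^2 k_{ij}}{v^3\mu}$ as precisely $dQ_{ij}/dv$ from the definition~\eqref{Qfunction}, since $Q_{ij}$ is defined as the $v$-antiderivative of that expression along the fixed entropy level. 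Therefore $(\rho\mathbf U)_i=\sum_j \tfrac{dQ_{ij}}{dv}\,\partial_{x_j}v=\sum_j \partial_{x_j}\bigl(Q_{ij}(v(x),\sigma_0)\bigr)$ by the chain rule. Conservation of mass~\eqref{mass}, $\mathrm{div}(\rho\mathbf U)=0$, then reads $\sum_i \partial_{x_i}\sum_j \partial_{x_j}Q_{ij}=0$, i.e.~\eqref{QLaplace}. The converse direction (that a solution of~\eqref{QLaplace} yields a consistent filtration) follows by reading these implications backwards, using positivity of $\varepsilon_T$ to invert $\tau$ and positivity of $Q_{ij}$ (noted after~\eqref{Qfunction}) so that $v$ can be recovered.

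The main obstacle is bookkeeping rather than conceptual: one must be careful that the symmetry of the tensor $Q_{ij}$ makes the double-divergence $\sum_{i,j}\partial_{x_i}\partial_{x_j}Q_{ij}$ equal to $\mathrm{div}\bigl(\mathrm{div}\,Q\bigr)$ written componentwise (no ordering ambiguity since mixed partials commute), and that the integral defining $Q_{ij}$ in~\eqref{Qfunction} is taken \emph{at the fixed level} $\sigma_0$, so that $d/dv$ there is the total derivative along $T=\tau(v,\sigma_0)$ — this is what legitimizes the chain-rule step $\partial_{x_j}Q_{ij}=\tfrac{dQ_{ij}}{dv}\partial_{x_j}v$. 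A secondary subtlety is that the relations from the Proposition hold only modulo $\langle d\sigma\rangle$; I must invoke the ergodicity hypothesis explicitly to kill that term on each $\mathbf{D}_k$, which is exactly why the theorem is stated for a given entropy level.
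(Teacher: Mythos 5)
Your proposal is correct and follows essentially the same route as the paper: substitute Darcy's law into mass conservation, use the relation $\nabla p=-v^{-2}c^{2}\nabla v$ on the fixed entropy level, and identify $\frac{c^{2}k_{ij}}{v^{3}\mu}\partial_{x_i}v$ with $\frac{dQ_{ij}}{dx_i}$ via the chain rule and the definition~(\ref{Qfunction}). The only difference is that you spell out the role of adiabaticity and the converse direction more explicitly than the paper does, which is a welcome clarification rather than a deviation.
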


\begin{proof}
Indeed,%
\begin{eqnarray*}
-\mathrm{div}\left( \rho \frac{k}{\mu }\left( \nabla _{p}\right) \right) &=&-%
\mathrm{div}\left( \sum_{ij}\frac{k_{ij}}{v\mu }\frac{\partial p}{\partial
x_{i}}\frac{\partial }{\partial x_{j}}\right) =\mathrm{div}\left( \sum_{ij}%
\frac{c^{2}k_{ij}}{v^{3}\mu }\frac{\partial v}{\partial x_{i}}\frac{\partial
}{\partial x_{j}}\right) = \\
\mathrm{div}\left( \sum_{ij}\frac{dQ_{ij}}{dx_{i}}\frac{\partial }{\partial
x_{j}}\right) &=&\sum_{i,j}\frac{d^{2}Q_{ij}}{dx_{i}dx_{j}}.
\end{eqnarray*}
\end{proof}

Let's rewrite equation (\ref{QLaplace}) in more details. To this end let us
introduce two symmetric matrices:%
\begin{equation*}
\mathrm{Hess}v=\left\Vert \frac{\partial ^{2}v}{\partial x_{i}\partial x_{j}}%
\right\Vert ,\ \left( \frac{\partial v}{\partial x}\right) ^{2}=\left\Vert
\frac{\partial v}{\partial x_{i}}\frac{\partial v}{\partial x_{j}}%
\right\Vert .
\end{equation*}

Then,%
\begin{equation*}
\frac{dQ_{ij}}{dx_{k}}=Q_{ij}^{\prime }\frac{\partial v}{\partial x_{k}},\quad%
\frac{d^{2}Q_{ij}}{dx_{k}dx_{l}}=Q_{ij}^{\prime }\frac{\partial ^{2}v}{%
\partial x_{k}\partial x_{l}}+Q_{ij}^{\prime \prime }\ \frac{\partial v}{%
\partial x_{k}}\frac{\partial v}{\partial x_{l}}\ ,
\end{equation*}%
and therefore equation (\ref{QLaplace}) takes the form%
\begin{equation}
\mathrm{Tr}\left( Q^{\prime }\cdot \mathrm{Hess}v+Q^{\prime \prime }\cdot \left(
\frac{\partial v}{\partial x}\right) ^{2}\right) =0,  \label{QLaplaceD}
\end{equation}%
where $Q^{\prime }=\left\Vert Q_{ij}^{\prime }\right\Vert $ and $Q^{\prime
\prime }=\left\Vert Q_{ij}^{\prime \prime }\right\Vert .$

The following two cases shall be important for us.

\begin{enumerate}
\item The porous media is isotropic, i.e. $k_{ij}=k\delta _{ij},$ where $%
\delta _{ij}$ is the Kronecker symbol. Then,
\begin{equation*}
Q_{ij}=Q\ \delta _{ij},
\end{equation*}%
where
\begin{equation*}
Q\left( v\right) =\int \frac{c^{2}k}{v^{3}\ \mu }dv.
\end{equation*}

\begin{theorem}
Basic equations of adiabatic filtration in isotropic media with a given
level of specific entropy $R\sigma _{0}$ are equivalent to equation%
\begin{equation}
\Delta (Q\left( v\right) )=0,  \label{QLaplaceIso}
\end{equation}%
where $\Delta $ is the Laplace operator.
\end{theorem}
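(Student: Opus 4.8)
The plan is to obtain this statement as an immediate specialization of the preceding theorem --- the anisotropic filtration equation~(\ref{QLaplace}) --- to the case of a scalar permeability tensor. First I would set $k_{ij}=k\,\delta_{ij}$ in the definition~(\ref{Qfunction}) of $\left\Vert Q_{ij}\right\Vert$. The integrand $c^{2}k_{ij}/(v^{3}\mu)$ then becomes $\bigl(c^{2}k/(v^{3}\mu)\bigr)\delta_{ij}$, and since the entropy level $\sigma_{0}$ is fixed we have $T=\tau(v,\sigma_{0})$, so that $c^{2}$, $k$ and $\mu$ are all functions of $v$ alone along the relevant states; integrating in $v$ therefore gives $Q_{ij}=Q(v)\,\delta_{ij}$ with $Q(v)=\int c^{2}k\,(v^{3}\mu)^{-1}\,dv$ a genuine scalar function of the single field $v$. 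Here the existence and smoothness of $\tau(v,\sigma_{0})$, already established before~(\ref{Qfunction}) from positivity of $\varepsilon_{T}$ on the applicable domain, is what makes $Q$ well defined.

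Next I would substitute $Q_{ij}=Q\,\delta_{ij}$ into~(\ref{QLaplace}), so that $\sum_{i,j}d^{2}Q_{ij}/(dx_{i}dx_{j})=\sum_{i}d^{2}Q/dx_{i}^{2}$. Applying the chain rule to the composite $Q(v(x))$, i.e. $d^{2}Q/dx_{i}^{2}=Q'(v)\,\partial^{2}v/\partial x_{i}^{2}+Q''(v)\,(\partial v/\partial x_{i})^{2}$, and summing over $i$ yields $Q'(v)\,\Delta v+Q''(v)\,\left\vert\nabla v\right\vert^{2}=\Delta\bigl(Q(v)\bigr)$. The same conclusion can be read off the expanded form~(\ref{QLaplaceD}): in the isotropic case $Q'$ and $Q''$ are scalar multiples of the identity matrix, whence $\mathrm{Tr}\bigl(Q'\cdot\mathrm{Hess}v\bigr)=Q'\,\Delta v$ and $\mathrm{Tr}\bigl(Q''\cdot\left(\partial v/\partial x\right)^{2}\bigr)=Q''\,\left\vert\nabla v\right\vert^{2}$, so~(\ref{QLaplaceD}) collapses to $\Delta(Q(v))=0$.

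Finally, for the converse direction I would run the computation in the proof of the preceding theorem backwards: given a solution $v$ of $\Delta(Q(v))=0$, one recovers $T=\tau(v,\sigma_{0})$ and $p=RT\phi_{v}$ and checks directly that $\mathbf{U}=-(k/\mu)\nabla_{p}$ satisfies $\mathrm{div}(\rho\mathbf{U})=0$ and $\mathbf{U}(\sigma)=0$; monotonicity of $Q$ (its derivative $c^{2}k/(v^{3}\mu)$ has constant sign on the applicable domain) guarantees that $v$ is determined by $Q(v)$, so nothing is lost in the substitution. I do not anticipate any genuine obstacle: the whole argument is an algebraic specialization plus one application of the chain rule, and the only point deserving explicit mention is the reduction of all thermodynamic quantities to functions of $v$ once $\sigma_{0}$ is fixed.
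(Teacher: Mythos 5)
Your proposal is correct and follows essentially the same route as the paper: the paper states the isotropic theorem as an immediate specialization of equation~(\ref{QLaplace}) (whose proof is the divergence computation using $dp\equiv -v^{-2}c^{2}dv \bmod\langle d\sigma\rangle$), obtained by setting $k_{ij}=k\delta_{ij}$ so that $Q_{ij}=Q(v)\delta_{ij}$ and the double sum collapses to $\Delta(Q(v))$ via the chain rule, exactly as you write. Your added remarks on the well-definedness of $Q$ through $T=\tau(v,\sigma_{0})$ and on recovering the original system from a solution are consistent with the paper's setup and do not change the argument.
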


\item The porous media is non isotropic but homogeneous, i.e. $k_{ij}$ are
constants.

Then%
\begin{equation*}
Q_{ij}=k_{ij}q,
\end{equation*}%
where%
\begin{equation*}
q=\int \frac{c^{2}}{v^{3}\ \mu }dv.
\end{equation*}

\begin{theorem}
Basic equations of adiabatic filtration in non isotropic but homogeneous
media with a given level of specific entropy $R\sigma _{0}$ are equivalent
to equation%
\begin{equation}
\mathrm{Tr}\left( k\cdot \mathrm{Hess}\left( q\right) \right) =0,
\label{QLaplaceHom}
\end{equation}%
where $\mathrm{Hess}\left( q\right) $ is the Hessian matrix of $q.$
\end{theorem}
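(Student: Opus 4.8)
The plan is to specialize the identity~(\ref{QLaplace}) already proved for a general entropy level to the homogeneous anisotropic case, so almost all the work is done. First I would pin down the structure of the tensor $Q$ here. Since the permeability entries $k_{ij}$ are constants while the speed of sound $c$ and the viscosity $\mu$ are functions of $(v,T)$, restricting to the fixed entropy surface $\sigma=R\sigma_{0}$ --- on which $T=\tau(v,\sigma_{0})$ by the discussion preceding~(\ref{Qfunction}) --- makes $c^{2}/(v^{3}\mu)$ a function of $v$ alone. Hence the defining formula~(\ref{Qfunction}) gives $Q_{ij}(v,\sigma_{0})=k_{ij}\,q(v)$ with the single scalar potential $q(v)=\int \frac{c^{2}}{v^{3}\mu}\,dv$, exactly the $q$ named just before the theorem.

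Next I would substitute this product form into~(\ref{QLaplace}). Because every $k_{ij}$ is constant, it commutes with the total derivatives $d/dx_{i}$, so
\[
\sum_{i,j}\frac{d^{2}Q_{ij}}{dx_{i}dx_{j}}=\sum_{i,j}k_{ij}\,\frac{d^{2}q}{dx_{i}dx_{j}}=\mathrm{Tr}\!\left(k\cdot \mathrm{Hess}(q)\right),
\]
where $\mathrm{Hess}(q)=\left\Vert \partial^{2}q/\partial x_{i}\partial x_{j}\right\Vert$ and symmetry of $k$ is used in the last step. Setting this expression to zero is precisely~(\ref{QLaplaceHom}); reading the chain backwards, $\mathrm{Tr}(k\cdot\mathrm{Hess}(q))=0$ gives back $\sum_{i,j}d^{2}Q_{ij}/dx_{i}dx_{j}=0$, which by the first theorem of this subsection is equivalent to the basic filtration equations. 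Thus~(\ref{QLaplaceHom}) and the filtration system are equivalent. As a consistency check one may also note the chain-rule expansion $\mathrm{Hess}(q)=q'\,\mathrm{Hess}v+q''\left(\frac{\partial v}{\partial x}\right)^{2}$, which exhibits~(\ref{QLaplaceHom}) as the $Q_{ij}=k_{ij}q$ specialization of~(\ref{QLaplaceD}).

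There is no real analytic obstacle: the statement is the algebraic observation that constant permeability entries can be pulled out of the second total derivative in~(\ref{QLaplace}), collapsing a vector Laplace-type equation to a single scalar equation for $q(v)$. The only point I would be careful to justify is that one and the same potential $q$, independent of the index pair $ij$, governs all entries $Q_{ij}$; this is what turns the left-hand side into the clean trace $\mathrm{Tr}(k\cdot\mathrm{Hess}(q))$ rather than a sum of unrelated Hessians, and it follows from the fact that fixing the entropy level reduces the $(v,T)$-dependence of $c^{2}$ and $\mu$ --- and trivially of the constants $k_{ij}$ --- to a dependence on $v$ alone.
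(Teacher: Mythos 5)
Your proposal is correct and follows essentially the same route as the paper: the paper likewise observes that constancy of the $k_{ij}$ gives $Q_{ij}=k_{ij}q$ with $q=\int \frac{c^{2}}{v^{3}\mu}\,dv$ and then specializes the general identity~(\ref{QLaplace}) (equivalently~(\ref{QLaplaceD})) by pulling the constants outside the derivatives to obtain $\mathrm{Tr}\left(k\cdot\mathrm{Hess}(q)\right)=0$. Your extra remark that fixing the entropy level via $T=\tau(v,\sigma_{0})$ is what makes $q$ a function of $v$ alone is exactly the justification the paper relies on from its earlier setup.
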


\begin{remark}
By applying an orthogonal transformation we can transform matrix $k$ to the
diagonal form and in these coordinates equation (\ref{QLaplaceHom}) takes
the form%
\begin{equation}
k_{1}\frac{d^{2}q}{dx_{1}^{2}}+k_{2}\frac{d^{2}q}{dx_{2}^{2}}+k_{3}\frac{%
d^{2}q}{dx_{3}^{2}}=0,  \label{QLaplaceHomRed}
\end{equation}%
where $k_{1},k_{2},k_{3}$ are eigenvalues of $k.$
\end{remark}

\begin{remark}
A solution of all these equations gives us $v$ and $T$ as functions in the
domain $\mathbf{D}$ and by using the state equations we'll find also $p$ and $%
\varepsilon .$ Substituting these values in the equations of the phase
transitions curves give us equations for points in $\mathbf{D}$ where we
expect phase transitions.
\end{remark}
\end{enumerate}

\subsection{Integration of basic equations for isotropic and homogenous media%
}

Here we consider the isotropic case, the case of homogeneous media may be
elaborated in the similar manner.

Formula (\ref{QLaplaceIso}) shows that solutions of the basic system of
equations for adiabatic filtration (of the given level of entropy $\sigma
_{0}$) \ could be found in the following way.

Take a harmonic function $u$ in some domain $\mathbf{D}\subset %
\mathbb{R}^{3},$ and define, in general multivalued, function $v$ such that
\begin{equation*}
u=Q\left( v,\sigma _{0}\right) .
\end{equation*}%
Remark, that function $u$ is defined outside of a subset $\Sigma _{u}\subset
\mathbf{D}$ of singular points, where $\left\{ c^{2}=0\right\} ,$ and this
set consists of points in $D$, where $\left( v,T\right) \in \Sigma _{i}$.

Outside of this singular set all branches of the multivalued function $v$
are smooth and satisfy the basic equations. Also phase transitions occur at
points $x^{\left( 1\right) }$ and $x^{\left( 2\right) }$, \ where
corresponding values of one of branches of $v$ satisfy equations (\ref%
{PhaseEqui}).

\subsubsection{Model of source.}

Let's consider first solutions that correspond to sources of given intensity
$I.$ It means that
\begin{equation*}
u\left( x\right) =\frac{I}{4\pi }\left\vert x-a\right\vert ^{-1},
\end{equation*}%
if the source is located at point $a\in \mathbf{D,}$ or
\begin{equation}
v(x)=Q^{-1}\left( \frac{I}{4\pi }\left\vert
x-a\right\vert ^{-1},\sigma _{0}\right) .  \label{solutinf1}
\end{equation}%
This is a multivalued solution such that its branches are smooth outside of $%
\Sigma _{u}$ and phase transitions occur at points where equations (\ref%
{PhaseEqui}) hold.

\subsubsection{Dirichlet boundary problem.}

Consider an open and connected domain $\mathbf{D}\subset \mathbb{R}^{3}$
with a smooth boundary $\partial \mathbf{D,}$ equipped with a set $\mathbf{A}%
=\left\{ a_{i},i=1,\dots ,N\right\} \subset \mathbf{D}$ of points, which are the locations of sources, with
given intensities $I_{i}.$

We are looking for a solution $v$ of Dirichlet boundary problem of the basic
system in domain $\mathbf{D\smallsetminus A,}$ for a real gas with
Massieu-Plank potential function\textit{\ }$\phi \left( T,v\right) $ ,
having given intensities $I_{i}$ at points $a_{i},$ given entropy level $%
\sigma _{0},$ and given values of specific volume (or temperature) on the
boundary:
\begin{equation*}
\left. v\right\vert _{\partial \mathbf{D}}=v_{0}.
\end{equation*}%
To this end we take a harmonic in the domain $\mathbf{D\smallsetminus A}$
function $u$ such that
\begin{equation*}
u=\sum_{i}\frac{I_{i}}{4\pi }\left\vert x-a_{i}\right\vert^{-1} +u_{0},
\end{equation*}%
where $u_{0}$ is the harmonic function in $\mathbf{D}$ with the following
boundary conditions%
\begin{equation*}
\left. u_{0}\right\vert _{\partial \mathbf{D}}=Q\left( v_{0}\right) -\left.
\left( \sum_{i}\frac{I_{i}}{4\pi }\left\vert x-a_{i}\right\vert^{-1} \right)
\right\vert _{\partial \mathbf{D}}.
\end{equation*}%
Then the multivalued function $Q^{-1}\left( u\right) $ gives us the solution
of the Dirichlet boundary problem.
\subsection{Examples}
Here we apply above methods to the van der Waals and Peng-Robinson gases filtration.
\subsubsection{van der Waals gases.}
Given level of the specific entropy $\sigma_{0}$ allows us to express the temperature $T$ and the pressure $p$ as functions of the specific volume $v$ using~(\ref{phiandent}). For van der Waals gases we get:
\begin{equation*}
T(v)=c(3v-1)^{-\alpha},\qquad p(v)=8c(3v-1)^{-\alpha}-\frac{3}{v^{2}},
\end{equation*}
where
\begin{equation*}
\alpha=1+\frac{2}{n},\qquad c=\exp\left( \frac{3\sigma_{0}}{4n} \right).
\end{equation*}
Function $Q(v)$ is defined by the following relation:
\begin{equation*}
-\frac{\mu}{k}Q(v)=-\frac{2}{v^{3}}+8c\frac{(3v-1)^{-\alpha}}{v}+8c\int(3v-1)^{-\alpha}v^{-2}dv.
\end{equation*}
As it has already been shown, to get an explicit solution we have to invert $Q(v)$. For this reason it is essential for us to figure out the invertibility conditions for $Q(v)$.
\begin{theorem}
Function $Q(v)$ is invertible if the specific entropy constant $c$ satisfies the following inequality:
\begin{equation*}
c>\frac{1}{4\alpha}(1+\alpha)^{1+\alpha}(2-\alpha)^{2-\alpha}.
\end{equation*}
\end{theorem}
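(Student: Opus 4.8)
The plan is to reduce the statement to a one-variable monotonicity problem. Since $Q$ is a smooth function of the single scalar $v$ on the natural domain $\{v>1/3\}$ (where the reduced van der Waals equations and the functions $T(v)$, $p(v)$ above make sense, and where $k,\mu>0$), it is invertible there as soon as $Q'$ keeps a constant sign; conversely, if $Q'$ changed sign then $Q$ would have a local extremum and fail to be injective. So everything comes down to computing $Q'$ and deciding when it stays positive on $\{v>1/3\}$.

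First I would differentiate the defining relation
\[
-\frac{\mu}{k}\,Q(v)=-\frac{2}{v^{3}}+8c\,\frac{(3v-1)^{-\alpha}}{v}+8c\int(3v-1)^{-\alpha}v^{-2}\,dv
\]
with respect to $v$ (here $k/\mu$ is the positive constant that appears in the formula). The term $8c(3v-1)^{-\alpha}v^{-2}$ produced by differentiating $8c(3v-1)^{-\alpha}/v$ cancels against the derivative of the integral, leaving the clean two-term expression
\[
-\frac{\mu}{k}\,Q'(v)=\frac{6}{v^{4}}-\frac{24c\alpha}{v\,(3v-1)^{\alpha+1}}=\frac{6}{v^{4}}\left(1-\frac{4c\alpha\,v^{3}}{(3v-1)^{\alpha+1}}\right).
\]
Because $6/v^{4}>0$ and $k/\mu>0$, the sign of $Q'(v)$ equals the sign of $\dfrac{4c\alpha v^{3}}{(3v-1)^{\alpha+1}}-1$, so $Q'>0$ on all of $\{v>1/3\}$ precisely when $c>\sup_{v>1/3}g(v)$, where $g(v)=\dfrac{(3v-1)^{\alpha+1}}{4\alpha\,v^{3}}$. (Equivalently, one could reach the same $g$ by combining the Proposition above with $Q'(v)=(\text{speed of sound})^{2}\,k/(v^{3}\mu)$, which gives $Q'(v)=-\dfrac{k}{\mu v}\,p'(v)$, and then checking directly when $p'(v)<0$.)

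It then remains to locate $\sup g$. I would use the logarithmic derivative $(\ln g)'(v)=\dfrac{3(\alpha+1)}{3v-1}-\dfrac{3}{v}$, whose unique zero is $v^{\ast}=\dfrac{1}{2-\alpha}$; at this point I would invoke the standing assumption $n>2$ on the number of degrees of freedom, which makes $\alpha=1+2/n<2$, so that $v^{\ast}>1/3$ is a genuine interior point. Since $g(v)\to 0$ as $v\to(1/3)^{+}$ and, using $\alpha<2$ again, $g(v)=O(v^{\alpha-2})\to 0$ as $v\to\infty$, the critical point $v^{\ast}$ is the global maximum on $(1/3,\infty)$. Substituting $3v^{\ast}-1=\dfrac{1+\alpha}{2-\alpha}$ into $g$ yields
\[
g(v^{\ast})=\frac{1}{4\alpha}(1+\alpha)^{1+\alpha}(2-\alpha)^{2-\alpha},
\]
which is exactly the asserted threshold, so $c$ larger than this forces $Q'>0$ everywhere and hence $Q$ invertible.

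The argument is essentially routine one-variable calculus; the two places that deserve care are the cancellation collapsing $Q'$ to its two-term form, and the verification that $g$ attains its supremum at an interior point rather than at infinity — which is precisely what $\alpha<2$ (i.e.\ $n>2$) secures. It is worth noting that for $\alpha\ge 2$ the function $g$ is increasing and unbounded on $\{v>1/3\}$, so no value of $c$ would make $Q$ globally invertible there, and one could only expect invertibility on a bounded range of specific volumes.
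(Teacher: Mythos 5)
Your proof is correct and follows essentially the same route as the paper: reduce invertibility to the non-vanishing of $Q'$ on $v>1/3$, which (since $Q'$ is proportional to $-p'(v)$ along the isentrope) amounts to the equation $w(v)=(3v-1)^{1+\alpha}v^{-3}=4c\alpha$ having no root, and then maximize $w$ at $v_{0}=1/(2-\alpha)$ to obtain the stated threshold. If anything you are slightly more careful than the paper, which asserts the interior maximum of $w$ without checking the boundary behaviour as $v\to(1/3)^{+}$ and $v\to\infty$ or flagging the condition $\alpha<2$ (i.e.\ $n>2$) that makes $v_{0}$ a genuine interior point.
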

\begin{proof}
We need to define a condition when the function $Q(v)$ is monotonic if $v>1/3$. This means that its derivative must not be equal to zero. As the conditions $Q^{\prime}(v)=0$ and $p^{\prime}(v)$ are equivalent, it easy to check that
\begin{equation*}
Q^{\prime}(v)=0\Longleftrightarrow w(v)=4c\alpha,
\end{equation*}
where $w(v)=(3v-1)^{1+\alpha}v^{-3}$. Function $w(v)$ has a maximum at the point $v_{0}$, which is derived from the equation $w^{\prime}(v)=0$ and it equals
\begin{equation*}
v_{0}=\frac{1}{2-\alpha}.
\end{equation*}
Therefore, $w(v_{0})=(1+\alpha)^{1+\alpha}(2-\alpha)^{2-\alpha}$.
\end{proof}

Having a solution we can move the coexistence curve from the space of thermodynamical variables to the $\mathbb{R}^{3}(x_{1},x_{2},x_{3})$. Let's suppose that we have a point source at the point $x_{0}$. Then, in case of invertible function $Q(v)$ we have the picture presented by figure~\ref{phspace1}.

\begin{figure}
\centering
\includegraphics[scale=.3]{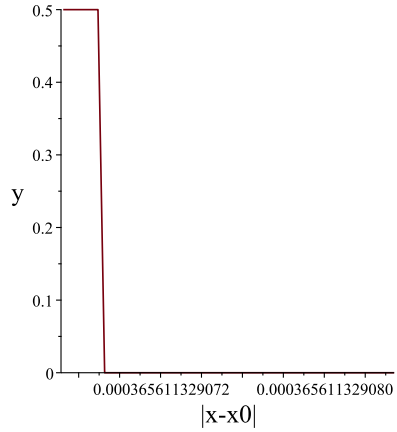}\\
\caption{Phases in space for van der Waals gases if $Q(v)$ is invertible. If variable $y=0$, the domain $|x-x_{0}|$ corresponds to the gas phase. If $y=0.5$, the domain $|x-x_{0}|$ corresponds to the intermediate state (condensation process).}
\label{phspace1}
\end{figure}

Figure~\ref{phspace1} shows that in the neighbourhood of the source condensation of the gas is observed.

In case of irreversibility of the function $Q(v)$ the picture is much more complicated. First of all, solution is multivalued. It is shown in figure~\ref{tempmulti}. This means that there is a number of possibilities for the gas filtration development.

\begin{figure}[ht!]
\centering \subfigure[]{
\includegraphics[width=0.4\linewidth]{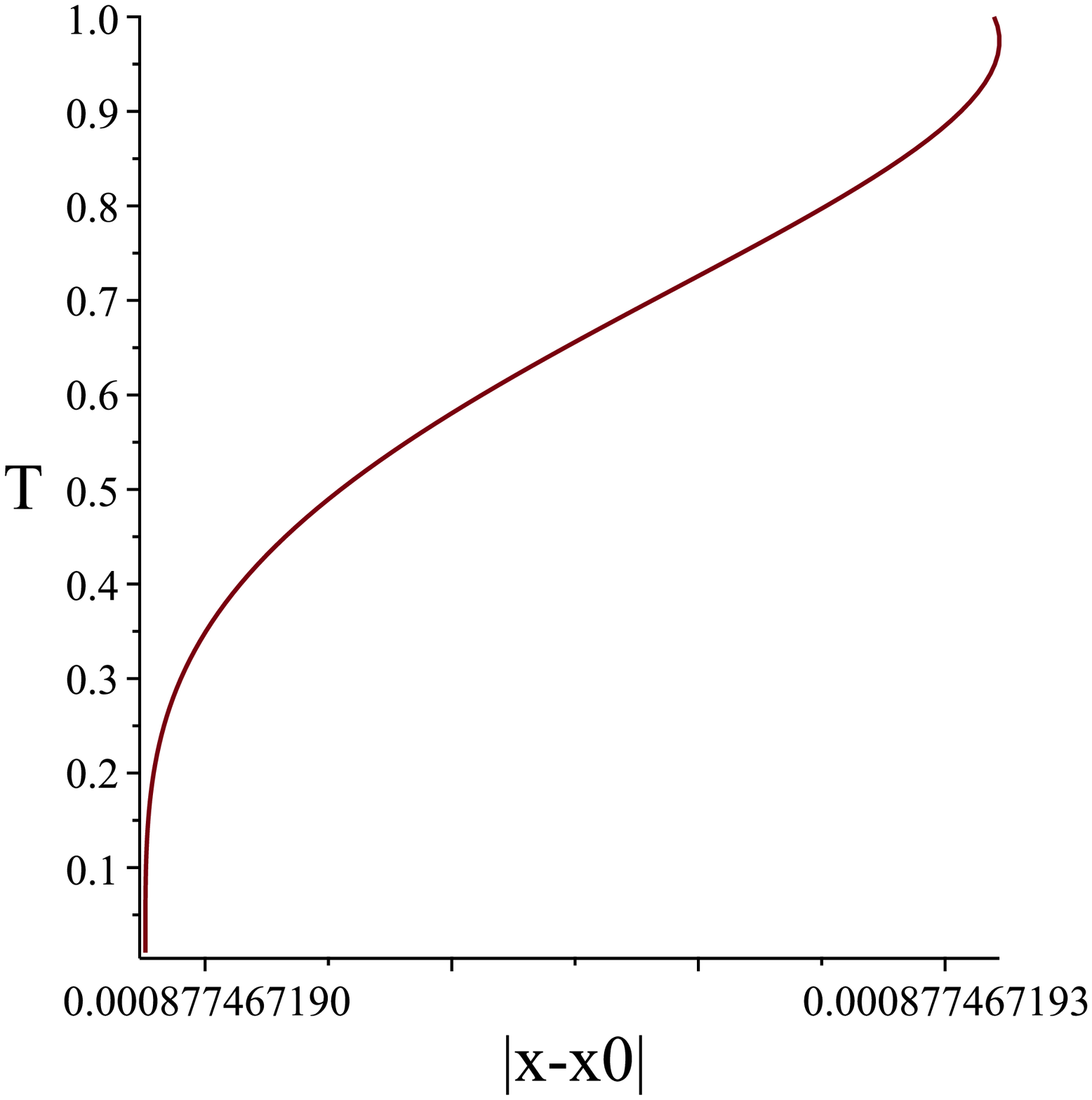} \label{templow} }
\hspace{4ex}
\subfigure[]{ \includegraphics[width=0.4\linewidth]{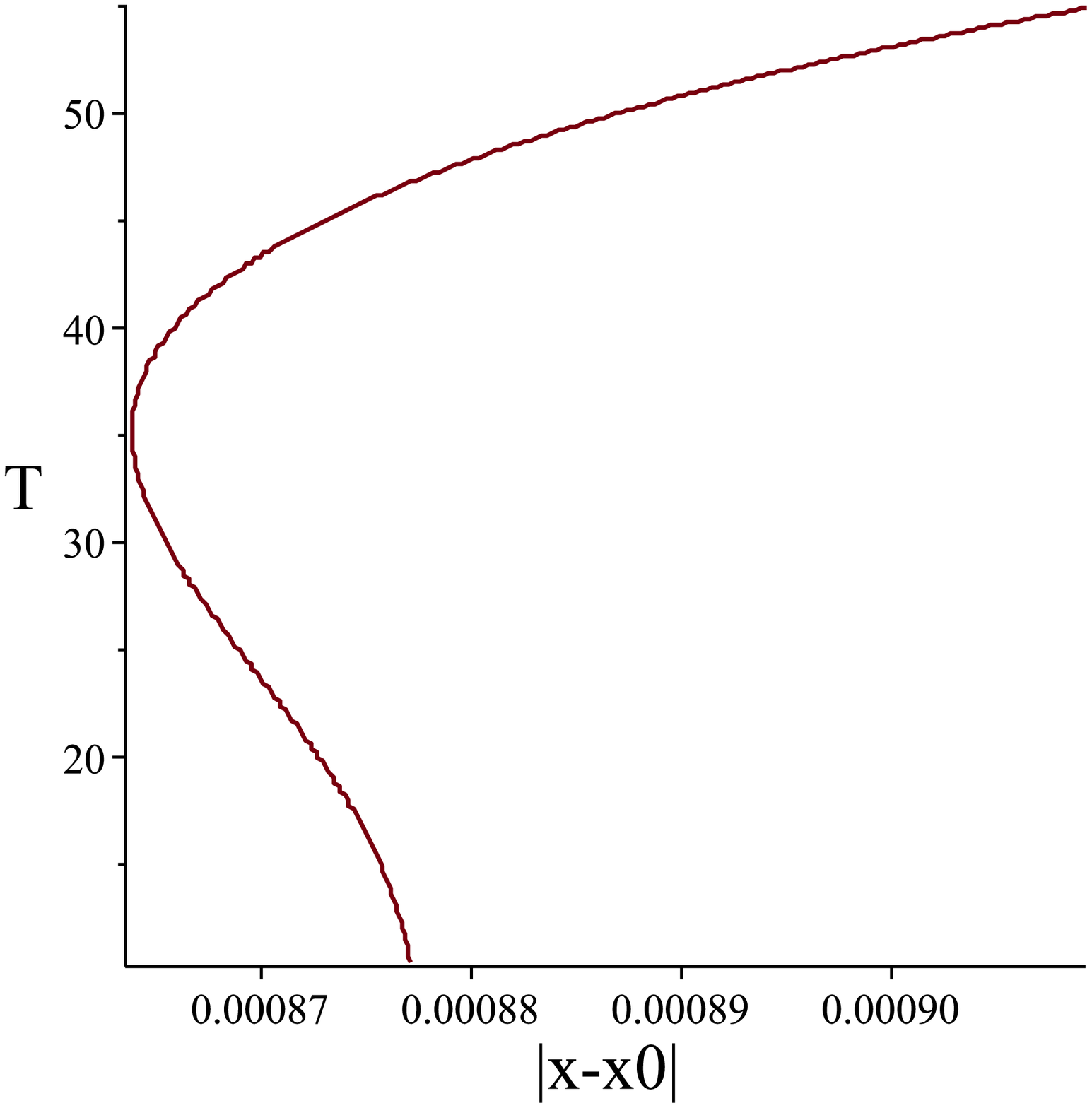} \label{temphigh} }
\caption{Dependence of temperature on the distance from the source for van der Waals gases: \subref{templow} represents low temperatures; \subref{temphigh} represents high temperatures.} \label{tempmulti}
\end{figure}

In figure~\ref{tempmulti} the temperature under the critical point is uniquely determined, which leads to the uniqueness in phase under the critical point. It is shown in  figure~\ref{phspace2}. We can see that the picture is similar to the previous one.

\begin{figure}
\centering
\includegraphics[scale=.35]{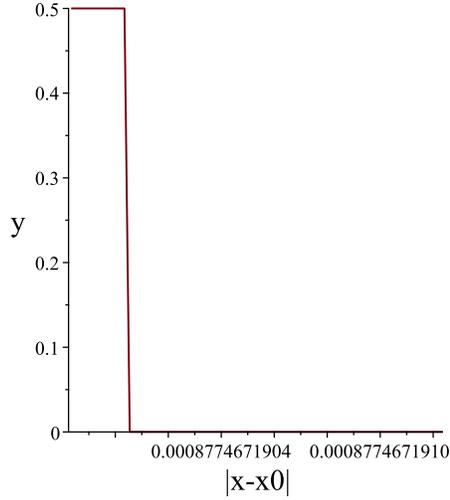}\\
\caption{Phases in space for van der Waals gases if $Q(v)$ is irreversible. If variable $y=0$, the domain $|x-x_{0}|$ corresponds to the gas phase. If $y=0.5$, the domain $|x-x_{0}|$ corresponds to the intermediate state (condensation process).}
\label{phspace2}
\end{figure}

But in figure~\ref{tempmulti1} the temperature under the critical point is multivalued. This means that we can expect that at the same point there can be different phases. It is shown in figure~\ref{phspace3}.

\begin{figure}[ht!]
\centering \subfigure[]{
\includegraphics[width=0.4\linewidth]{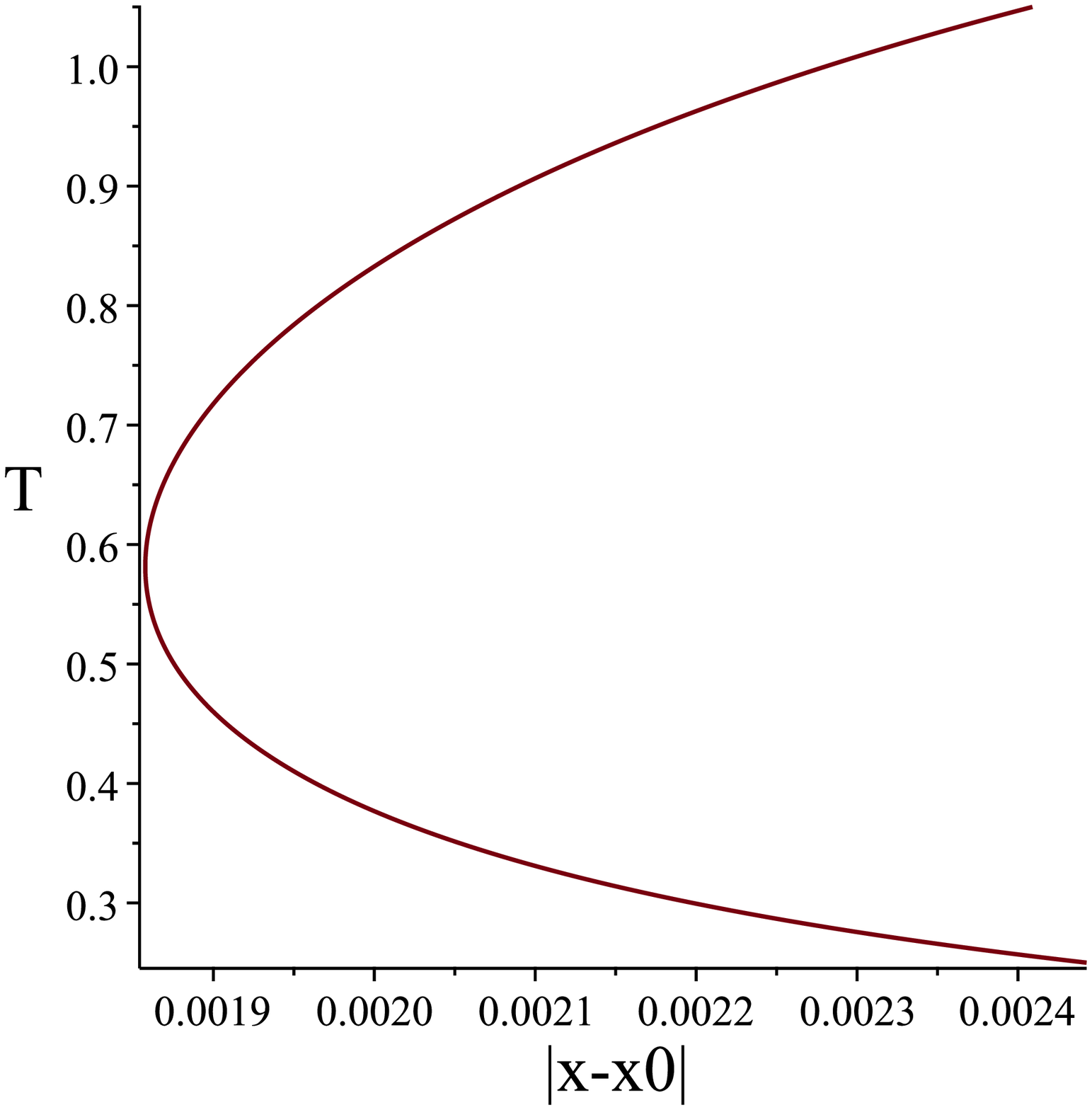} \label{templow1} }
\hspace{4ex}
\subfigure[]{ \includegraphics[width=0.4\linewidth]{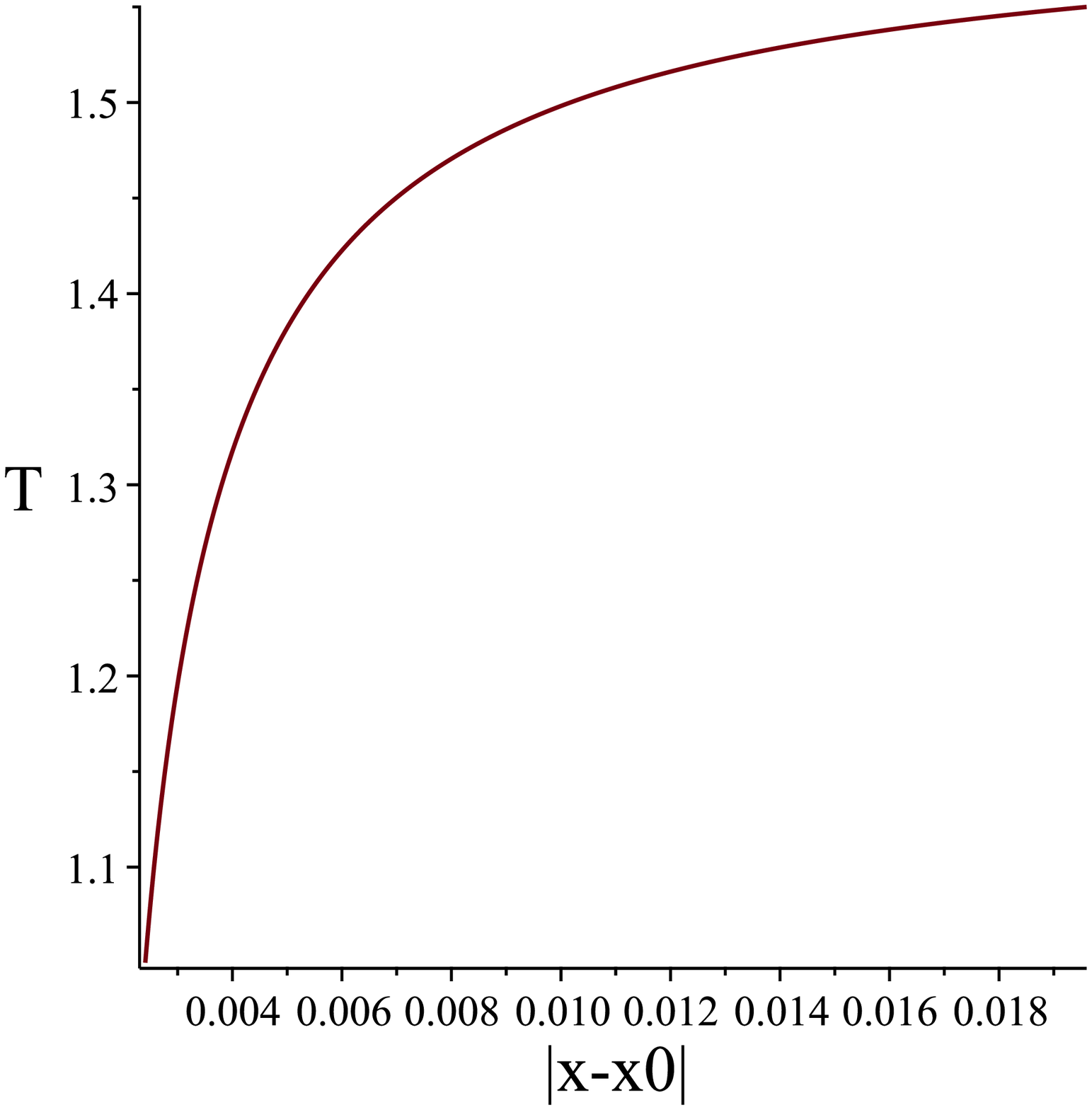} \label{temphigh1} }
\caption{Dependence of temperature on the distance from the source for van der Waals gases: \subref{templow1} represents low temperatures; \subref{temphigh1} represents high temperatures.} \label{tempmulti1}
\end{figure}

\begin{figure}
\centering
\includegraphics[scale=.3]{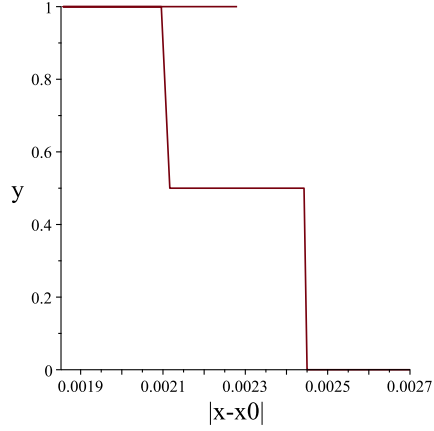}\\
\caption{Phases in space for van der Waals gases if $Q(v)$ is irreversible. If variable $y=0$, the domain $|x-x_{0}|$ corresponds to the gas phase, if $y=0.5$, the domain $|x-x_{0}|$ corresponds to the intermediate state (condensation process), if $y=1$, the domain $|x-x_{0}|$ corresponds to the liquid phase.}
\label{phspace3}
\end{figure}

\subsubsection{Peng-Robinson gases.}
In case of Peng-Robinson gases the pressure $p$ and the temperature $T$ can be expressed as functions of the specific volume $v$ as follows:
\begin{equation*}
T(v)=c(v-1)^{-\alpha+1},\qquad p(v)=c(v-1)^{-\alpha}-\frac{1}{v^{2}+2v-1},
\end{equation*}
where
\begin{equation*}
c=\exp\left( \frac{2\sigma_{0}}{n} \right),\qquad\alpha=1+\frac{2}{n}.
\end{equation*}
Function $Q(v)$ is defined by the following relation:
\begin{equation*}
\begin{split}
-\frac{\mu}{k}Q(v)=&\frac{c}{v(v-1)^{\alpha}}-\frac{v+2}{v^{2}+2v-1}+\ln\left(\frac{v^{2}}{v^{2}+2v-1}\right)+{}\\&+\frac{3}{\sqrt{2}}\mathrm{arctanh}\left(\frac{v+1}{\sqrt{2}}\right)+c\int(v-1)^{-\alpha}v^{-2}dv.
\end{split}
\end{equation*}
Finally, invertibility condition can be formulated as follows.
\begin{theorem}
Function $Q(v)$ is invertible if the specific entropy constant $c$ satisfies the following inequality:
\begin{equation*}
c>\frac{2}{\alpha}\frac{(v_{0}+1)(v_{0}-1)^{\alpha+1}}{(v_{0}^{2}+2v_{0}-1)^{2}},
\end{equation*}
where $v_{0}$ is the root of the equation:
\begin{equation*}
(\alpha-2)v^{3}+3\alpha v^{2}+(\alpha+2)v-\alpha+4=0.
\end{equation*}
There exists a real root of the above equation if $\alpha<2$, or, equivalently, $n>2$.
\end{theorem}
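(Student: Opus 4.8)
The plan is to follow the template of the van der Waals case: reduce invertibility of $Q$ to the non-vanishing of $p'(v)$ on the chosen entropy level curve, and then turn the requirement ``$p'(v)\neq 0$ for all $v>1$'' into a single-variable extremal problem for the constant $c$.

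First I would record the reduction. On the level set $\sigma=R\sigma_{0}$ the Proposition gives $dp=-v^{-2}c^{2}\,dv$ (here $c$ is the speed of sound), while formula~(\ref{Qfunction}) in the isotropic case reads $Q'(v)=c^{2}k/(v^{3}\mu)$. Since $k,\mu,v^{3}>0$ on $v>1$, both $Q'(v)$ and $-p'(v)$ have the same sign as $c^{2}$; in particular, if $p'(v)<0$ for all $v>1$ then $Q'(v)>0$ there, so $Q$ is strictly increasing on $(1,\infty)$ and hence invertible. It therefore suffices to find a condition on the entropy constant $c$ that forces $p'(v)<0$ for all $v>1$.

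Next I would compute $p'$ from the reduced equation of state $p(v)=c(v-1)^{-\alpha}-(v^{2}+2v-1)^{-1}$, obtaining
\[
p'(v)=-\alpha c\,(v-1)^{-\alpha-1}+\frac{2(v+1)}{(v^{2}+2v-1)^{2}}.
\]
Multiplying by $(v-1)^{\alpha+1}>0$, the condition $p'(v)<0$ for every $v>1$ becomes $\alpha c>w(v)$ for every $v>1$, where $w(v)=2(v+1)(v-1)^{\alpha+1}(v^{2}+2v-1)^{-2}$; thus it is enough to have $\alpha c>\sup_{v>1}w(v)$. On $(1,\infty)$ one has $v^{2}+2v-1>0$ (its positive zero $\sqrt{2}-1$ lies below $1$), so $w>0$ there; also $w(v)\to 0$ as $v\to 1^{+}$ because $\alpha+1>0$, and $w(v)\sim 2v^{\alpha-2}\to 0$ as $v\to\infty$ when $\alpha<2$, i.e. $n>2$. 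Hence for $\alpha<2$ the positive function $w$ attains a global maximum at an interior point $v_{0}$, which is necessarily a critical point. Logarithmic differentiation gives $w'(v)/w(v)=\frac{1}{v+1}+\frac{\alpha+1}{v-1}-\frac{4(v+1)}{v^{2}+2v-1}$, and clearing the denominator $(v+1)(v-1)(v^{2}+2v-1)$ and expanding turns $w'(v)=0$ into $(\alpha-2)v^{3}+3\alpha v^{2}+(\alpha+2)v+4-\alpha=0$, exactly the cubic in the statement (its value at $v=1$ is $4\alpha+4>0$ while it tends to $-\infty$, which re-confirms a root in $(1,\infty)$). Substituting $v_{0}$ back gives $\sup_{v>1}w=w(v_{0})=2(v_{0}+1)(v_{0}-1)^{\alpha+1}(v_{0}^{2}+2v_{0}-1)^{-2}$, and dividing the criterion $\alpha c>w(v_{0})$ by $\alpha$ yields the asserted inequality.

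The step I expect to be the main obstacle is the global nature of the extremum: a priori the cubic could have more than one root in $(1,\infty)$, so one must verify that $v_{0}$ really is where $w$ is globally largest — a sign analysis of $w'$ (equivalently of the cubic together with its derivative $3(\alpha-2)v^{2}+6\alpha v+(\alpha+2)$) should settle this, and it also shows where the hypothesis $\alpha<2$ is genuinely used, since for $\alpha\geq 2$ one has $w(v)\to\infty$ and no threshold on $c$ can make $Q$ monotone. A minor but real point is strictness: the supremum is attained at $v_{0}$, so $\alpha c=w(v_{0})$ would give $p'(v_{0})=Q'(v_{0})=0$, and hence the strict inequality in the statement is the correct one.
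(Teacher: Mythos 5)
Your argument is correct and follows exactly the route the paper takes for the analogous van der Waals theorem (reduce invertibility of $Q$ to the non-vanishing of $p'(v)$ via $Q'(v)=-\tfrac{k}{v\mu}\,p'(v)$ on the entropy level set, then maximize $w(v)$ over the physical range); the paper in fact states the Peng--Robinson version without proof, so your computation supplies the missing details, and the cubic and the threshold for $c$ you obtain match the statement. The one point you flag as a possible obstacle is easily dispatched: for $1<\alpha<2$ the derivative $3(\alpha-2)v^{2}+6\alpha v+\alpha+2$ of the cubic is positive at $v=1$ and has a single zero in $(1,\infty)$, so the cubic increases from its positive value $4\alpha+4$ at $v=1$ and then decreases monotonically to $-\infty$, giving exactly one root $v_{0}>1$ and hence a unique interior critical point of $w$, which must be its global maximum.
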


\section*{Acknowledgments}
This work was supported by the Russian Foundation for Basic Research (project No 18-29-10013).


\begin{thebibliography}{99}
\bibitem{GLRT} Gorinov A, Lychagin V, Roop M and Tychkov S 2019 Gas flow with phase transitions: thermodynamics and the
Navier-Stokes equations \textit{Nonlinear PDEs, their geometry and applications. Proceedings of Wisla 18 Summer School} (Switzerland: Springer Nature) 229--241

\bibitem{LychGSA} Lychagin V 2019 Adiabatic filtration of ideal gases in
a homogeneous and isotropic porous media \textit{Global and Stochastic Analysis}
vol 6(1)

\bibitem{Lib} Leibenson L S 1947 Motion of natural liquids and gases in a porous
medium \textit{Gostkhizdat} Moscow

\bibitem{Mus} Muskat M 1937 The Flow of Homogeneous Fluids Through Porous
Media (New York: McGraw-Hill)

\bibitem{LY} Lychagin V 2019 Contact Geometry, Measurement and
Thermodynamics. \textit{Nonlinear PDEs, their geometry and applications. Proceedings of Wisla 18 Summer School} (Switzerland: Springer Nature) 3--54

\bibitem{KLR} Kushner A, Lychagin V and Roubtsov V 2007 Contact geometry and nonlinear differential equations (Cambridge: Cambridge University Press)

\bibitem{Fort} Fortov V 2011 Equation of state \textit{Thermopedia} (doi:
10.1615/AtoZ.e.equation\_of\_state)

\bibitem{Tol} Tolman Richard C 1917 The Measurable Quantities of Physics \textit{Phys.
Rev.} 9(3) 237--253

\bibitem{Onnes} Kamerlingh Onnes H 1902 Expression of state of gases and
liquids by means of series \textit{KNAW Proceedings} 4 Amsterdam 125--147 


\bibitem{PR} Peng D Y and Robinson D B 1976 A New Two-Constant Equation of
State \textit{Industrial and Engineering Chemistry: Fundamentals.} 15(1) 59--64 (doi:10.1021/i160057a011)

\bibitem{Sch} Scheidegger Adrian E. 1960 The physics of flow through porous
media. Revised edition. (New York: The Macmillan Co.) 313 pp

\bibitem{LychSing} Lychagin V 1985 Singularities of multivalued solutions
of nonlinear differential equations, and nonlinear phenomena \textit{Acta Appl.
Math.} 3(2) 135--173

\bibitem{ST} Stanley Harry Eugene 1971 Introduction to Phase Transitions and
Critical Phenomena (Oxford: Oxford University Press) 333 pp

\bibitem{DLT} Duyunova A, Lychagin V and Tychkov S 2017
Classification of equations of state for viscous fluids \textit{Doklady
Mathematics} 95(2) 172--175 (doi:10.1134/S1064562417020211)

\end{thebibliography}
\end{document}